\documentclass[10pt]{article}

\usepackage[T1]{fontenc}
\usepackage{lmodern}
\usepackage[utf8]{inputenc} 
\usepackage{hyperref}       
\usepackage{url}            
\usepackage{booktabs}       
\usepackage{amsfonts}       
\usepackage{nicefrac}       
\usepackage{microtype}      
\usepackage{graphicx}
\usepackage{natbib}
\usepackage{doi}

\usepackage{authblk}

\usepackage{amsmath}
\usepackage{tikz}
\usepackage{makecell}
\usepackage{caption}
\usepackage{appendix}
\usepackage{tablefootnote}

\usepackage{amsthm}                                                                        
                                                                                             
  \newtheorem{theorem}{Theorem}[section]                                                     
  \newtheorem{lemma}[theorem]{Lemma}
  \newtheorem{proposition}[theorem]{Proposition}  

\title{Unity and Diversity of Intracellular pH Maintenance Mechanisms}
\author[1]{Guillaume Terradot}
\author[2]{Vincent Danos}
\affil[1]{Independent Researcher}
\affil[2]{Centre National de la Recherche Scientifique, 75794 Paris, France; and École Normale Supérieure, 75005 Paris, France}
\date{}

\begin{document}

\maketitle

\begin{abstract}
All cells must sustain ionic motive forces (IMFs)---the
electrochemical gradients of permeant ions, together with the
membrane potential they produce---to regulate intracellular pH,
drive secondary transport, and power ATP synthesis. Because membranes
are imperfectly impermeable, IMFs continuously dissipate through
passive leakage, and active transport must compensate at an energetic
cost that competes with growth and biosynthesis. How environmental
conditions set this cost, and why cells across the tree of life
share a common ionic logic yet deploy strikingly diverse transporter
repertoires, has lacked a unifying quantitative account.

Here we derive a thermodynamic lower bound on the power required to
maintain IMFs at steady state. The bound equals the rate of free-energy
dissipation by ion leakage, holds across a broad family of
electrophysiological models, and is independent of organism, energy
source, or transporter architecture. Using this framework, we quantify
how extracellular pH, salinity, and temperature reshape the energetic
burden of electrophysiological homeostasis.

Cost minimization recovers, from first principles, the universal
K\textsuperscript{+}-rich, Na\textsuperscript{+}-poor cytoplasm observed
across taxa: asymmetric membrane permeabilities alone are sufficient to
explain it. The same framework predicts that extremophiles face
systematically higher maintenance costs under extreme pH, salinity, and
temperature, and that when sustaining a large proton motive force
becomes prohibitive, cells should shift to metabolic regimes compatible
with smaller PMF, providing a thermodynamic rationale for
stress-induced metabolic reconfiguration.

Finally, we show that perfect energetic efficiency is unattainable in
practice, and that this very imperfection, combined with environmental
variability, selects for the diversity of transport architectures
observed in nature: each architecture is optimal within a discrete
regime of environmental constraints. Together, these results turn IMF
maintenance from a qualitative housekeeping cost into a predictive
thermodynamic quantity, linking environmental physics to the unity and
diversity of cellular ion transport.
\end{abstract}

\section{Introduction}

All cells maintain ionic motive forces (IMFs)---the electrochemical
gradients of permeant ions across the plasma membrane, together with
the membrane potential ($\Delta\psi$) they entail. Two of these
gradients are physiologically critical: a near-neutral intracellular
pH ($\mathrm{pH}_i$), required for protein stability and biochemical
activity~\cite{yang1993,krulwich2011,Aoi2014}, and a sufficiently negative proton
motive force (PMF), required for respiratory catabolism, ATP synthesis
via $\mathrm{F_1F_o}$, and secondary transport~\cite{Yan2013}. Because
$\mathrm{pH}_i$, PMF, and $\Delta\psi$ are coupled through the same
set of permeant ions, they form a single electrophysiological system
that cells must regulate jointly despite fluctuations in extracellular
pH, salinity, and temperature.

Maintaining this system has an unavoidable thermodynamic cost.
Membranes are imperfectly impermeable, so IMFs continuously dissipate
through passive ion leakage, and active transport must compensate at
steady state. The resulting energetic burden competes directly with
growth and biosynthesis. Classical pump--leak models capture this
balance mechanistically~\cite{tosteson1960,Kay2017}, linking transport fluxes to
permeabilities and electrochemical gradients, but they remain
organism- and architecture-specific.

\emph{No general framework currently sets a universal lower bound on
the energetic cost of sustaining IMFs, nor predicts how that cost
varies with environmental conditions.}

This gap matters because environmental variation imposes very
different electrophysiological challenges: acidic environments elevate
proton influx, alkaline conditions demand strongly negative
$\Delta\psi$, high salinity increases cation leakage, and elevated
temperature raises membrane permeability. Whether the costs imposed by
these conditions follow predictable thermodynamic rules---and whether
deeply conserved features such as the K\textsuperscript{+}-rich,
Na\textsuperscript{+}-poor cytoplasm fall out of dissipation
minimization---has remained unresolved. So has the converse puzzle:
why, on a shared ionic logic, cells deploy such diverse transporter
repertoires.

Here we derive a thermodynamic lower bound on the power required to
maintain IMFs at steady state. The bound holds across a broad family
of electrophysiological models and equals the rate of free-energy
dissipation by passive ion leakage; it is independent of organism,
energy source, and the molecular details of active transport. Using
this framework, we then quantify how extracellular pH, salinity, and
temperature reshape the maintenance burden of $\mathrm{pH}_i$, PMF,
and $\Delta\psi$ regulation.

The framework yields three classes of prediction. First, cost
minimization recovers the universal K\textsuperscript{+}-rich,
Na\textsuperscript{+}-poor cytoplasm from asymmetric membrane
permeabilities alone. Second, it constrains transporter stoichiometry
and predicts shifts between catabolic regimes when sustaining a large
PMF becomes prohibitive, providing a thermodynamic rationale for
stress-induced metabolic reconfiguration. Third, it explains why
perfect energetic efficiency is unattainable in practice and why this
imperfection, combined with environmental variability, selects for the
diversity of transport architectures observed in nature---each
optimal within a discrete regime of environmental constraints.

The remainder of the paper is organized as follows. Section~\ref{sec:2.1} derives
the thermodynamic bound and validates it on two canonical transport
architectures. Section~\ref{sec:2.3} applies it to the prediction of intracellular
composition, section~\ref{sec:2.4} to environmental gradients of pH, salinity, and temperature section~\ref{sec:2.5} to different catabolic regimes. Section~\ref{sec:2.6} and \ref{sec:2.7} turns to the unity-and-diversity question,
showing how environmental variability partitions transporter
architectures into distinct optimal regimes. The appendix collects modelling details and simulations.

\section{Results}

\subsection{A thermodynamic bound on the cost of maintaining ionic motive forces}
\label{sec:2.1}

The central result of this section is a universal lower bound on the
power cells must spend to sustain ionic motive forces (IMFs): at
steady state, the active power $\mathcal{P}_{\mathrm{active}}$
delivered by transporters cannot fall below the rate
$\mathcal{P}_{\mathrm{leak}}$ at which free energy is dissipated by
passive ion leakage,
\begin{equation}
  \mathcal{P}_{\mathrm{active}} \geq \mathcal{P}_{\mathrm{leak}}.
  \label{eq:bound-informal}
\end{equation}
The bound follows from thermodynamics alone: at steady state, every
unit of free energy lost to leakage must be replaced by an equal
amount of work done by active transport, and any additional
dissipation in the transport reactions themselves can only widen the
gap. It is therefore independent of organism, energy source, and the
molecular details of active transport. In particular, it applies
identically to the two transporter architectures we will use as
running examples (Fig.~1).

In what follows we use ``energetic cost,'' ``power,'' and
``dissipation'' interchangeably for steady-state free-energy
expenditure per unit time, and we identify active expenditure with
leakage dissipation whenever the bound is saturated.

\paragraph{Kinetic framework.}
To turn~(\ref{eq:bound-informal}) into a quantitative prediction we
need explicit expressions for the leakage flux of each permeant ion
and for the free energy it dissipates. Throughout, we measure
free-energy changes in volts: $\Delta G_r$ denotes the reaction
free energy per unit charge, so that $F\,\Delta G_r$ is the
free-energy change in J/mol. We adopt a thermodynamically consistent
rate law~\cite{beard2008}: for a generic reaction $r$ with free-energy
change $\Delta G_r$, the net flux is
\begin{equation}
  j_r \;=\; j_r^{+}\!\left(1 - e^{\eta\, \Delta G_r}\right),
  \qquad \eta \equiv \frac{F}{RT},
  \label{eq:flux-generic}
\end{equation}
where $j_r^{+}$ is the unidirectional forward flux---the rate at
which $r$ proceeds in the forward direction in the absence of the
reverse reaction. Equation~(\ref{eq:flux-generic}) vanishes at
equilibrium ($\Delta G_r = 0$) and changes sign with $\Delta G_r$,
as required by the second law.

For the passive leakage of an ion $x$ across the membrane, the only
reaction is $x_{e} \rightleftharpoons x_{i}$, whose free-energy change
(in volts) is exactly the ionic motive force $\Delta G_x$. Specializing
(\ref{eq:flux-generic}) gives
\begin{equation}
  j_x \;=\; j_x^{+}\!\left(1 - e^{\eta\, \Delta G_x}\right).
  \label{eq:flux-leak}
\end{equation}

The forward leakage rate $j_x^{+}$ depends on the membrane permeability
to $x$, the extracellular concentration of $x$, and the membrane
potential. We use a trapezoidal-barrier approximation to the membrane
energy profile~\cite{garlid1989}---a piecewise-linear model that captures
the voltage dependence of one-way ion flux without committing to a
specific channel geometry---and write
\begin{equation}
  j_x^{+} \;=\; \frac{S}{V}\, P_x\, [x]_e\, f_b(u),
  \label{eq:forward-rate}
\end{equation}
where $P_x$ is the permeability, $S/V$ the surface-to-volume ratio,
$[x]_e$ the extracellular concentration, and $f_b(u)$ the dimensionless
voltage factor, with
\begin{equation}
  u \;\equiv\; -\, z_x\, \eta\, \Delta\psi,
  \label{eq:u-def}
\end{equation}
where $z_x$ is the ion's charge number and $\eta = F/RT$. Note that
$u$ is ion-specific: cations and anions of the same valence see voltage
factors of opposite sign. The explicit form of $f_b$ is given in
Appendix~\ref{appendix:D}. With Eqs.~(\ref{eq:flux-leak}) and~(\ref{eq:forward-rate}) we can
compute the leakage dissipation rate $\mathcal{P}_{\mathrm{leak}}$
explicitly. 

\paragraph{Two paradigmatic architectures.}
To show that~\eqref{eq:bound-informal} is architecture-independent we instantiate
it on two coupling topologies (Fig.~\ref{fig:Two_Models}). Whichever model we use, there are three types of transport reactions: leakage reactions $x_e \rightleftharpoons x_i$ and coupled transport reactions of which there are two types: PMF coupled transport of the form $\text{H}^+_e + x_i \rightleftharpoons \text{H}^+_i + x_e$ and metabolism-coupled transport $E^\star + x_i \rightleftharpoons E + x_e$.

In Model~A, each ion is transported by a reaction powered directly by intracellular
metabolism:
\begin{align}
\label{eq:Z.1}
\forall x \in \{\text{H}^+,\text{Na}^+,\text{K}^+,\text{Cl}^-\}: \quad
E^\star_r + \nu_{x,r}\, x_i &\rightleftharpoons E_r + \nu_{x,r}\, x_e\\
\label{eq:Z.2}
2\, \text{H}_2\text{O} &\rightleftharpoons \text{H}_3\text{O}^+ + \text{OH}^-
\end{align}
where $E^\star_r \rightarrow E_r$ is the energy-providing half-reaction
(e.g.\ ATP hydrolysis or NADH oxidation) and $\nu_{x,r}$ is the
stoichiometry of ion $x$ in reaction $r$. In Model~B, only protons are
pumped metabolically and the motive forces of Na$^+$, K$^+$, Cl$^-$ are
maintained by proton antiporters:
\begin{align}
\label{eq:Z.3}
E^\star + \nu_{\text{H}^+,M}\, \text{H}^+_e &\rightleftharpoons
E + \nu_{\text{H}^+,M}\, \text{H}^+_i\\
\label{eq:Z.4}
\forall x \in \{\text{Na}^+,\text{K}^+,\text{Cl}^-\}: \quad
\nu_{\text{H}^+,r}\, \text{H}^+_e + \nu_{x,r}\, x_i &\rightleftharpoons
\nu_{\text{H}^+,r}\, \text{H}^+_i + \nu_{x,r}\, x_e\\
\label{eq:Z.5}
2\, \text{H}_2\text{O} &\rightleftharpoons \text{H}_3\text{O}^+ + \text{OH}^-
\end{align}
We assume a single transport reaction per ion in each model; allowing
multiple redundant transporters does not change the bound, since at
minimum dissipation only one of any pair of redundant reactions is
active in a given electrochemical state~\cite{Terradot2024}. The next subsection turns the informal inequality~(\ref{eq:bound-informal}) into a quantitative bound on the active power required to maintain the full set of IMFs whic holds for the two canonical architectures described above.

\paragraph{A bound on energy expenditures for ion transport}

We now turn the informal inequality~\eqref{eq:bound-informal} of the previous paragraph into a quantitative statement. Our central claim is that, at steady state, the minimal power density diverted from intracellular metabolism to sustain an electrophysiological state is
\begin{equation}
  \label{eq:A.14}
  Q_T \equiv -F \sum_{r \in MCT} \Delta G_{E,r} j_r \geq -F \sum_{x \backslash \text{OH}^-} \Delta G_x j_x
\end{equation}
where MCT holds for metabolic-coupled transport reactions (one for each ion for Model A, reactions~\eqref{eq:Z.1} and a single one for protons in Model B, reaction~\eqref{eq:Z.3}) and where $\Delta G_{E,r}$ is the amount of energy consumed by intracellular metabolism per turnover of a MCT reaction of the form $E^\star + x_i \rightleftharpoons E + x_e$. In addition, the Faraday constant $F$ converts the volt-valued $\Delta G_x$
back to energy per mole of ion, $j_x$ is the leakage flux
(Eq.~\eqref{eq:flux-leak}), and $Q_T$ is a power \emph{density} (energy
per unit time per unit volume). Eq.~\eqref{eq:A.14} identifies the
minimal energetic cost of IMF maintenance with the rate of free-energy
dissipation associated with passive leakage, and it holds independently
of organism, energy source, or transporter architecture. We further show in appendix~\ref{appendix:E} that Eq.~\eqref{eq:A.14} holds for the two canonical architectures Model A and B.

\begin{figure}[h!]
\centering
\includegraphics[scale=0.5]{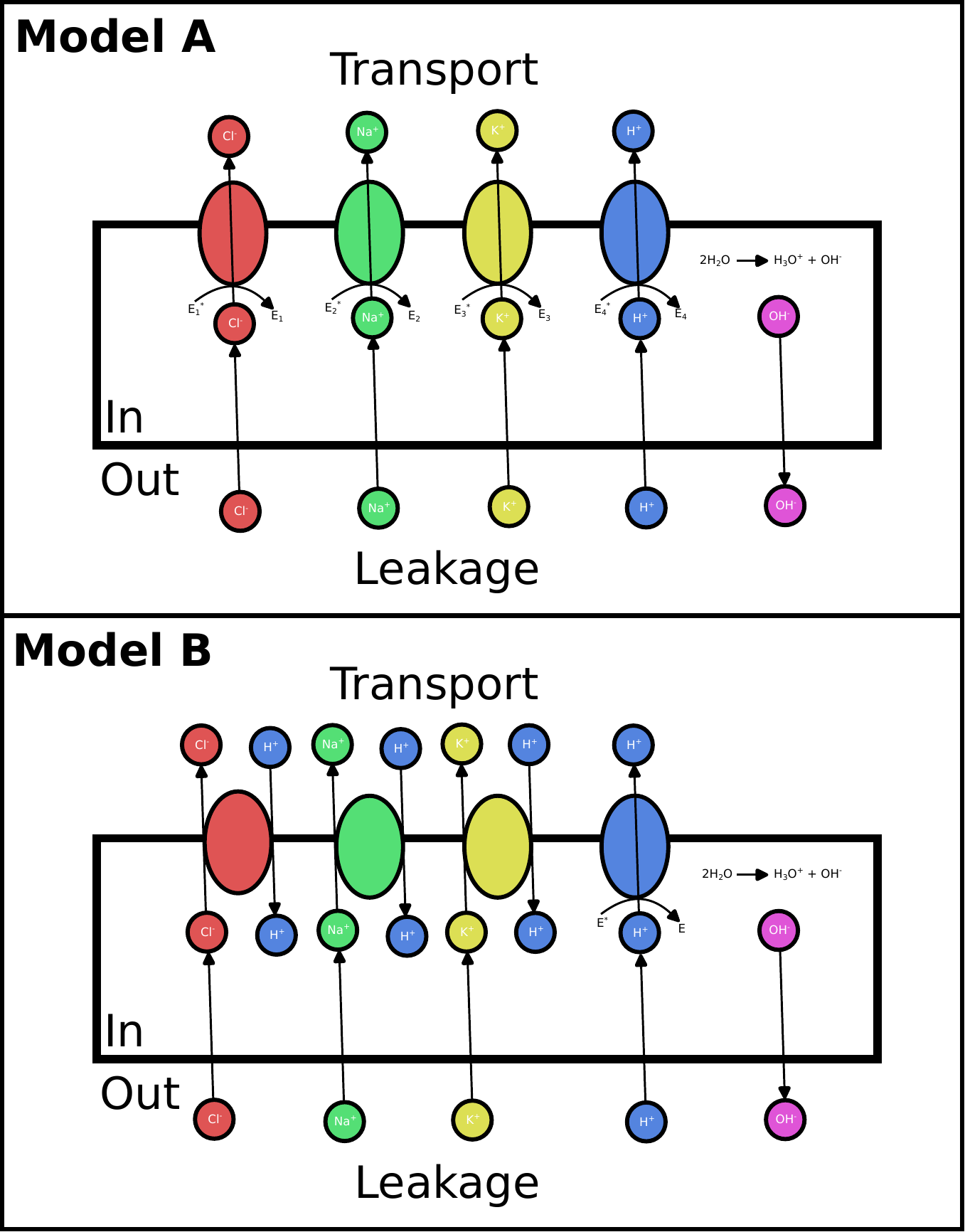}
\caption{\label{fig:Two_Models} Two canonical transport architectures
used as running examples throughout this work. Model~A powers every
ion transporter directly by an intracellular energy-providing
half-reaction (e.g.\ ATP hydrolysis or NADH oxidation). Model~B, the
\textit{E.\ coli} architecture, couples only proton extrusion to
intracellular metabolism; all other ions ride on proton:ion
antiporters. Despite their different coupling topologies, both models
saturate the same thermodynamic lower bound~\eqref{eq:A.14}.}
\end{figure}

\paragraph{Energetic efficiency.}
The energetic efficiency of a transport reaction is the fraction of
consumed free energy converted into electrochemical work. For reactions
coupled directly to intracellular metabolism (Eq.~\eqref{eq:Z.1} and
Eq.~\eqref{eq:Z.3}),
\begin{equation}
\label{eq:Z.6}
\epsilon_{x,r} = \dfrac{\nu_{x,r}\, \Delta G_x}{\Delta G_{E,r}},
\end{equation}
where $\Delta G_x$ is the IMF of ion $x$ and $\Delta G_{E,r}$ is the
free energy released by the energy-providing half-reaction. For
proton-coupled reactions in Model~B (Eq.~\eqref{eq:Z.4}),
\begin{equation}
\label{eq:Z.7}
\epsilon_{x,r} = \dfrac{\nu_{x,r}\, \Delta G_x}{\nu_{\text{H}^+,r}\, \Delta G_{\text{H}^+}}.
\end{equation}
Thermodynamic consistency requires $0 \le \epsilon_{x,r} \le 1$
(Appendix~\ref{appendix:E}): the upper bound is the idealized limit in which all
supplied free energy becomes electrochemical work.

\paragraph{Saturating the bound.}
In the idealized limit $\epsilon_{x,r} = 1$, Appendix~\ref{appendix:E} shows that
both Model~A and Model~B yield Eq.~\eqref{eq:A.14} for every ion
$x \neq \text{OH}^-$. (Dissipation due to hydroxide leakage is
negligible in all regimes we consider; see Appendix Table~\ref{table:Neglecting_OH}.) No
steady-state IMF maintenance can occur at a lower energetic cost,
whatever the transporter stoichiometry or the choice of power source.

\paragraph{Biological interpretation.}
Biologically, 100\% energetic efficiency is the evolutionary limit
reached by a cell adapted to a single, unchanging environment: such a
cell experiences a unique set of electrochemical potentials, and
selection tunes its transporter stoichiometries to match them. The
bound~\eqref{eq:A.14} is therefore the cost a perfectly adapted cell
would pay to fight passive leakage; real cells, which must cope with
variable environments and finite transporter repertoires, pay strictly
more (Section~\ref{sec:2.6}). With the bound in hand, we next relate IMFs to
membrane potential.

\subsection{A lower bound on membrane potential maintenance cost}
\label{sec:2.2}

Unlike Ref.~\cite{Terradot2024}, which treated a single cation, we now
consider three permeant species---two cations (Na$^+$, K$^+$) and one
representative anion (Cl$^-$)---plus intracellular captive charges
that contribute to $\Delta\psi$ but do not cross the membrane. The
steady-state voltage equation (Appendix~\ref{appendix:C})
generalizes to
\begin{equation}
\label{eq:A.6}
0 \;=\; \alpha_{\text{Y}} +
\sum_{x \in \{\text{Na}^+,\text{K}^+,\text{Cl}^-\}}
z_x\, \alpha_x\, e^{\eta\, (\Delta G_x - z_x\, \Delta\psi)},
\end{equation}
where
\begin{align*}
\alpha_{\text{Y}} &\equiv \frac{z_{\text{Y}}\,[\text{Y}]_i}{[\text{Ion}]_e}
&&\text{(captive charge density, normalized)},\\
[\text{Ion}]_e &\approx [\text{Na}^+]_e + [\text{K}^+]_e + [\text{Cl}^-]_e
&&\text{(total extracellular ion concentration)},\\
\alpha_x &\equiv \frac{[x]_e}{[\text{Ion}]_e}
&&\text{(fractional abundance of ion $x$)}.
\end{align*}
For a fixed extracellular ratio $\beta \equiv \alpha_{\text{Na}^+} /
\alpha_{\text{K}^+}$, many (SMF, KMF) pairs sustain the same
$\Delta\psi$. Appendix~3 enumerates the compatible pairs; here we
minimize over them.

\paragraph{Minimal cost of maintaining membrane potential.}

Because $\Delta\psi$ is the aggregate outcome of sustaining SMF, KMF,
and the chloride motive force simultaneously and is independent of the PMF, the minimal energetic
cost of a prescribed $\Delta\psi$ therefore is
\begin{equation}
\label{eq:A.15}
\min\!\left(Q_{\Delta\psi}\right) =
\min\!\left(Q_{\text{Na}^+}\right) +
\min\!\left(Q_{\text{K}^+}\right) +
\min\!\left(Q_{\text{Cl}^-}\right),
\end{equation}
where we ignored the term for PMF maintenance $Q_{\text{H}^+}$ in Eqs.~\eqref{eq:W.11c} and \eqref{eq:W.25} for both model A and B and assumed maximal energetic efficiencies. Upon these assumptions, each term of Eq.~\eqref{eq:A.16} is given by
\begin{equation}
\label{eq:A.15b}
\forall x \in \{\text{Na}^+,\text{K}^+,\text{Cl}^-\}: \min(Q_x) = - F \Delta G_x j_x
\end{equation}
as shown in lemma~\ref{lem:per-ion}, Eq.~\eqref{eq:W.3}. Taking $P_{\text{K}^+} >
P_{\text{Na}^+}$ (Table~\ref{table:constants}), we compute
$\min\!\left(Q_{\Delta\psi}\right)$ and express it as a fraction of
\begin{equation}
\label{eq:A.16}
Q_0 \;\equiv\; 3.14 \times 10^{-13}\,\text{W per cell},
\end{equation}
the reference respiratory power of a single \textit{E.\ coli} cell at
moderate growth rate (Appendix~\ref{appendix:G}). Figure~\ref{fig:1} displays the
result. Two robust trends emerge, and two predictions follow.

\paragraph{Trend 1: low $\beta$ raises dissipation.}
Decreasing $\beta$ (more extracellular K$^+$ than Na$^+$) increases
the cost. Because K$^+$ is the more permeable cation, its passive
leakage flux grows faster with the extracellular pool, and
compensating it demands more active transport.

\paragraph{Trend 2: captive charges widen the cost-free regime.}
Increasing $|\alpha_{\text{Y}}|$ enlarges the range of $\Delta\psi$
that can be sustained with all IMFs set to zero---the Donnan regime
in which no active transport is needed and the maintenance cost
collapses to zero. Outside this regime the cost grows with
$|\Delta\psi|$, as expected.

\paragraph{Prediction: selective pressure against Eyring-like
leakage.}
Comparing Goldman--Hodgkin--Katz (GHK) and Eyring leakage under
identical permeabilities, the Eyring model predicts substantially
larger dissipation: maintaining $\Delta\psi < -140$~mV would exceed
the entire respiratory budget $Q_0$. This suggests an evolutionary
pressure for GHK-like leakage kinetics whenever cells must sustain
strong polarization.

\paragraph{Prediction: salinity stress lowers $\Delta\psi$.}
Equation~\eqref{eq:forward-rate} shows that raising $[\text{Ion}]_e$, therefore $[x]_e$ for sodium or potassium,
increases the forward leakage flux, and hence the cost, of sustaining
any fixed $\Delta\psi$. Under salt stress, cells that cannot afford
the extra cost should depolarize. This matches direct observations in
\textit{E.\ coli}, where elevated NaCl concentrations induce
membrane depolarization~\cite{Schabala2009}.

\begin{figure}[h!]
\centering
\includegraphics[scale=0.4]{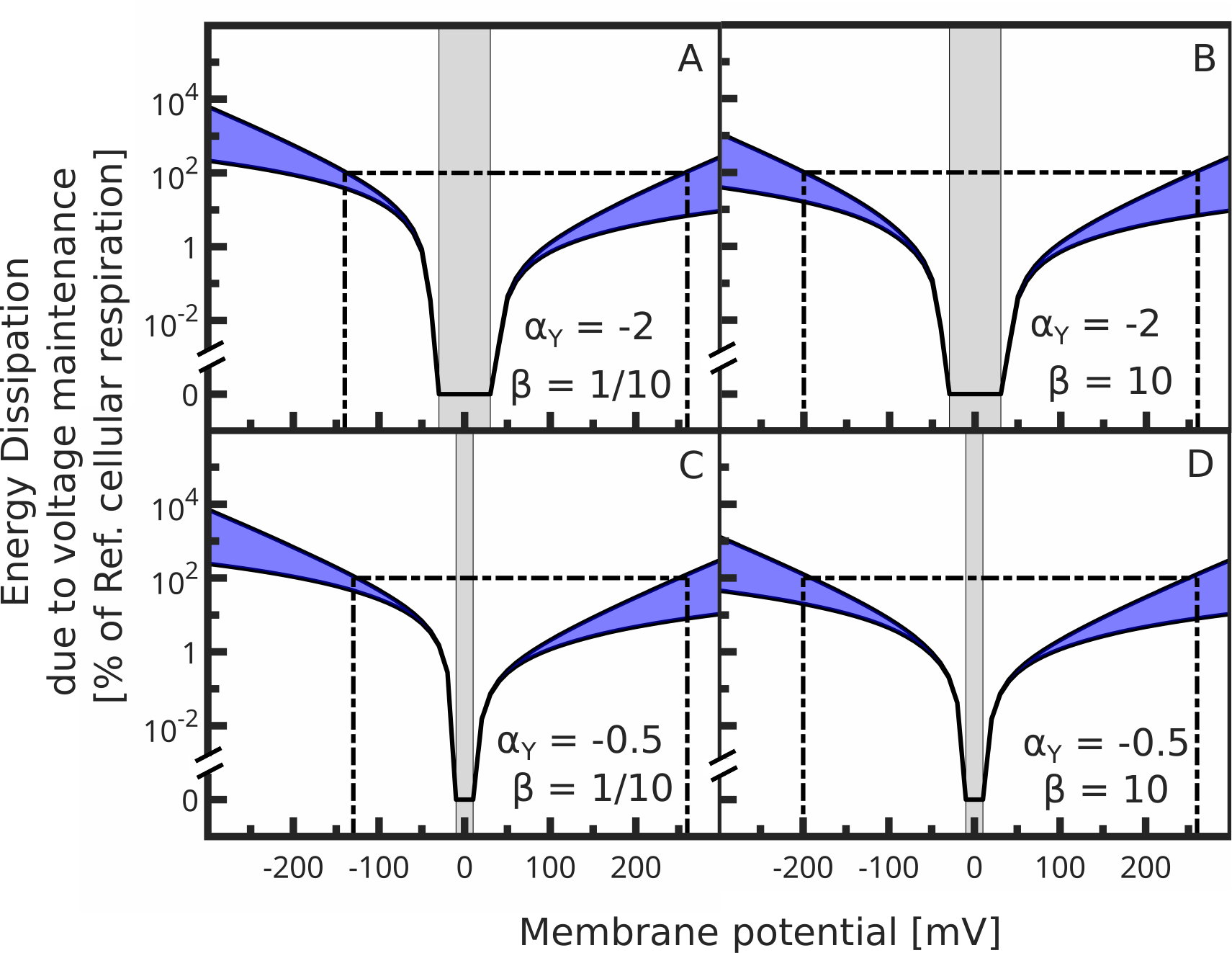}
\caption{\label{fig:1}
Minimal dissipation $Q_{\Delta\psi}$ required to sustain a membrane
potential $\Delta\psi$, as a fraction of the reference respiratory
power $Q_0$ of a single \textit{E.\ coli} cell
(Eqs.~\eqref{eq:A.15},\eqref{eq:A.16}), for several extracellular
cation ratios $\beta = \alpha_{\text{Na}^+}/\alpha_{\text{K}^+}$ and
captive-charge contributions $\alpha_{\text{Y}}$. Lower (GHK) and
upper (Eyring) bounds correspond to the two leakage laws. Grey
shading marks the Donnan regime, in which $\Delta\psi$ is sustained
with zero ionic motive forces. Higher $\beta$ reduces cost because
sodium is less permeable than potassium. Increased extracellular
salinity raises dissipation via enhanced cation leakage.}
\end{figure}

The trends above raise the next question: given a target
$\Delta\psi$, which partitioning of the total cost across SMF and
KMF is optimal, and what intracellular composition does it predict?
We address this in Section~2.3.

\subsection{Optimal CMFs and the resulting intracellular composition}
\label{sec:2.3}

Minimizing the maintenance cost~\eqref{eq:A.15} of a prescribed
$\Delta\psi$ yields a unique cellular state, characterized by
systematic asymmetry between the sodium and potassium motive forces
and by a K$^+$-rich, Na$^+$-poor cytoplasm that matches measurements
in \textit{Saccharomyces cerevisiae}~\cite{Sunder1996,Olz1993} and
\textit{Escherichia coli}~\cite{MEURY1981,Epstein1966}. Importantly,
this pattern emerges from permeability asymmetry alone: low
intracellular sodium need not be attributed to sodium toxicity, but
falls out directly as the cost-minimizing solution.

\paragraph{Optimal cationic motive forces.}
For a fixed $\Delta\psi$ and captive-charge contribution
$\alpha_{\text{Y}}$, we first compute the single-cation
cationic motive force (CMF) required to sustain $\Delta\psi$
(Appendix~Eq.~\ref{eq:C8}). At a given extracellular ratio
$\beta$, multiple (SMF, KMF) pairs realize the same CMF (Appendix~\ref{appendix:C}, figure~\ref{Appendix_fig:figure5});
among these we select the pair that minimizes $W_{\Delta\psi}$
(Appendix~Fig.~\ref{Appendix_fig:figure6}).

The optimum (Fig.~\ref{fig:2}A--D) shows that at any prescribed
$\Delta\psi$ the optimal SMF is more negative than the optimal KMF.
The reason is direct: K$^+$ is the more permeable cation, so K$^+$
leakage contributes more to dissipation per unit of driving force,
and the cost-minimizer responds by allocating a smaller KMF and a
larger SMF. Intracellular ionic composition is therefore the
solution of a constrained dissipation-minimization problem, not an
arbitrary physiological set point.

Two further parameter dependences matter:
\begin{itemize}
  \item Increasing $|\alpha_{\text{Y}}|$ shifts the required CMFs
        toward zero at fixed $\Delta\psi$, reducing the magnitude of
        the SMF and KMF that must be actively maintained.
  \item Raising $\beta$ (more extracellular Na$^+$ than K$^+$)
        pushes the optimal SMF and KMF closer to zero, because
        large cation gradients become energetically less favorable.
\end{itemize}
Sufficiently strong polarization requires active export of both
Na$^+$ and K$^+$ (Appendix~Fig.~\ref{Appendix_fig:figure7}).

\paragraph{Intracellular ion concentrations.}
Intracellular concentrations follow directly from the
definition of the electrochemical potential:
\begin{equation}
\label{eq:A.31}
\Delta G_x \;=\; z_x\, \Delta\psi + \eta^{-1}\ln\!\left(\frac{[x]_i}{[x]_e}\right),
\end{equation}
which rearranges (using the volts convention for $\Delta G_x$) to
\begin{equation}
\label{eq:A.32}
[x]_i \;=\; [x]_e\, e^{\eta\,(\Delta G_x - z_x\, \Delta\psi)}.
\end{equation}
Feeding the cost-minimizing (SMF, KMF) pair into~\eqref{eq:A.32}
yields the compositions in Fig.~\ref{fig:2}E--J.

\paragraph{Empirical matches.}
Three of the four patterns shown are direct comparisons with
published data:
\begin{enumerate}
  \item[(i)] At negative $\Delta\psi$, intracellular chloride is
        low, matching~\cite{Schultz1962}; it rises as the cell
        depolarizes.
  \item[(ii)] Intracellular K$^+$ exceeds Na$^+$ across broad
        parameter ranges, matching
        \textit{S.~cerevisiae}~\cite{Sunder1996,Olz1993} and
        \textit{E.~coli}~\cite{MEURY1981,Epstein1966}. The
        K$^+$/Na$^+$ dominance is predicted to weaken as cells
        depolarize into stationary phase (Appendix~Fig.~9),
        matching~\cite{Schultz1961}.
\end{enumerate}
Two further patterns are predictions that follow from the
optimization:
\begin{enumerate}
  \item[(iii)] Raising $\beta$ (extracellular Na:K ratio) raises the
        intracellular Na:K ratio, but K$^+$ remains dominant as long
        as the permeability asymmetry is preserved
        (Appendix~Fig.~\ref{Appendix_fig:figure9} and \ref{Appendix_fig:figure8}).
  \item[(iv)] Raising $|\alpha_{\text{Y}}|$ increases intracellular
        concentrations of all three ions at the optimum, shifting
        the optimum toward higher intracellular ionic strength.
\end{enumerate}

These results rely only on the experimentally supported assumption
$P_{\text{K}^+} > P_{\text{Na}^+}$. If this asymmetry is general
across living cells, the framework predicts the same K$^+$-rich,
Na$^+$-poor cytoplasm across taxa, \emph{as a direct thermodynamic
consequence of minimizing membrane-potential maintenance cost.}

\begin{figure}[h!]
\centering
\includegraphics[scale=0.20]{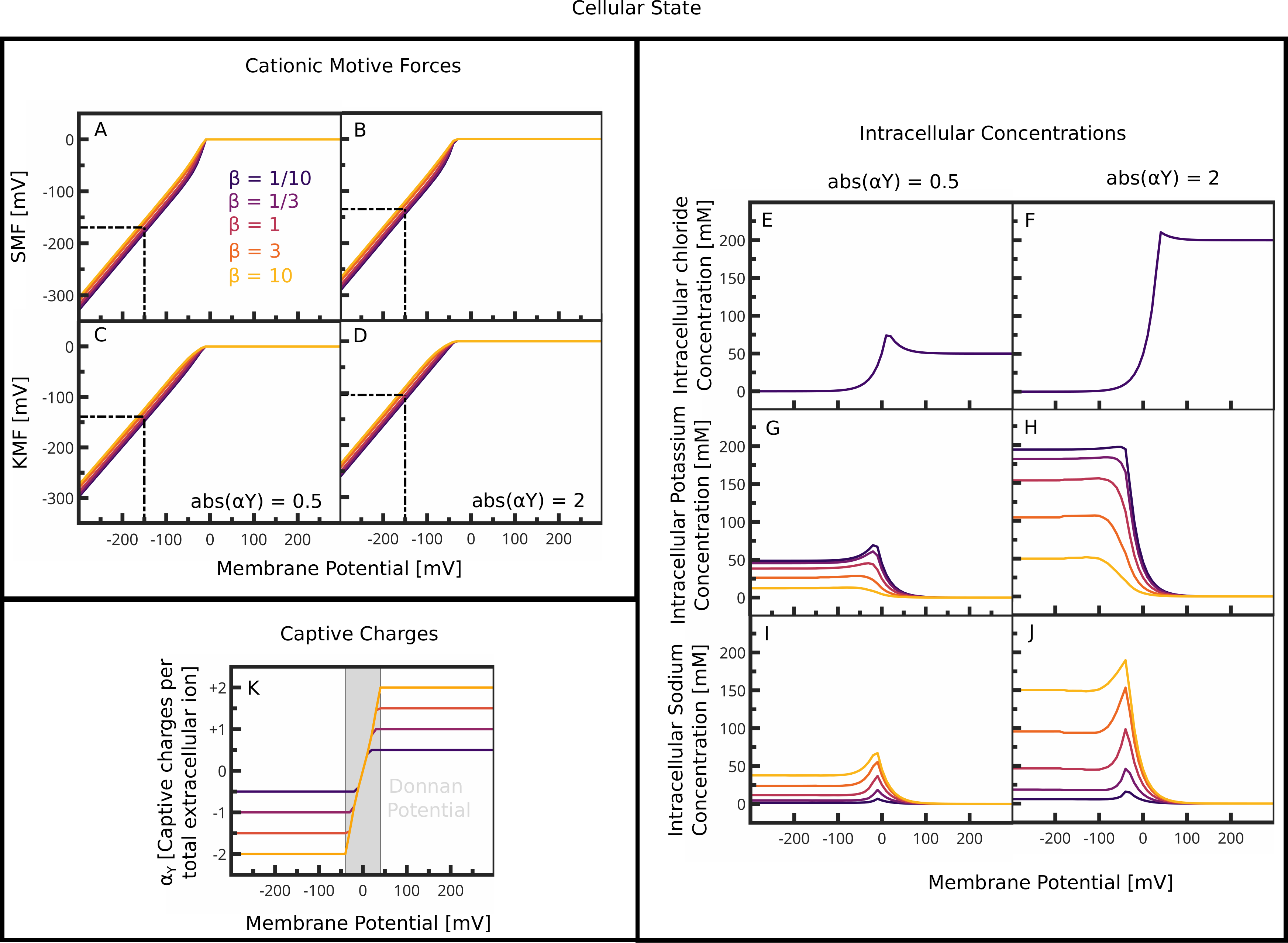}
\caption{\label{fig:2}
Cost-minimizing cellular states for a prescribed membrane potential
$\Delta\psi$. (A--D) Optimal SMF and KMF for varying $\beta$ and
$|\alpha_{\text{Y}}|$: SMF is systematically more negative than KMF
because $P_{\text{K}^+} > P_{\text{Na}^+}$; increasing
$|\alpha_{\text{Y}}|$ shrinks the gradients required. (E--J)
Intracellular Na$^+$, K$^+$, Cl$^-$ concentrations at the optimum;
K$^+$ dominates across broad parameter ranges, and Cl$^-$ rises with
depolarization. (K) The cost-minimizing $\alpha_{\text{Y}}$ as a
function of $\Delta\psi$; within the Donnan band (grey,
$|\alpha_{\text{Y}}| = 2$) optimal maintenance occurs at zero
ionic motive forces. Bottom line: permeability asymmetry alone is
sufficient to explain the universal K$^+$-rich, Na$^+$-poor
cytoplasm.}
\end{figure}

\subsection{Extremophiles share a high energetic cost of maintaining
internal pH and membrane potential}
\label{sec:2.4}

Acidophiles, alkalinophiles, halophiles, and thermophiles share a
single thermodynamic feature: they all operate at systematically
elevated unavoidable dissipation relative to moderate environments.
The underlying ionic mechanisms differ, but the thermodynamic bottom
line is the same.

The total maintenance cost is the sum of the cost of sustaining the
membrane potential and that of sustaining the proton motive force,
\begin{equation}
\label{eq:A.35}
\min\!\left(Q_T\right) \;=\;
\min\!\left(Q_{\text{H}^+}\right) +
\min\!\left(Q_{\Delta\psi}\right),
\end{equation}
where $Q_{\Delta\psi}$ is given by Eq.~\eqref{eq:A.15} and
$Q_{\text{H}^+}$ by Eq.~\eqref{eq:W.11}, see appendix~\ref{appendix:E}. We compute
$\min\!\left(Q_T\right)$ under prescribed intracellular pH, PMF, and
osmotic pressure (Appendix~\ref{appendix:H} for osmotic implementation; Appendix~\ref{appendix:C}
for the $Q_{\Delta\psi}$ optimization).

Figure~\ref{fig:3} and Table~\ref{table:1} summarize how each
environmental axis reshapes the dissipation budget:
\begin{itemize}
  \item \textbf{Acidic extracellular pH.} Dissipation is dominated
        by $Q_{\text{H}^+}$: a large inward proton leakage must be
        balanced at steady state. Cation leakage is negligible,
        and the cost is insensitive to salinity and osmotic
        pressure.
  \item \textbf{Alkaline extracellular pH.} Dissipation is
        dominated by cation leakage: sustaining internal pH and
        PMF now demands strongly negative $\Delta\psi$, which
        amplifies Na$^+$ and K$^+$ leakage. Cost rises sharply
        relative to neutral conditions.
  \item \textbf{High salinity.} Forward leakage flux scales with
        $[\text{Ion}]_e$ (Eq.~\eqref{eq:forward-rate}), so the cost of
        sustaining $\Delta\psi$ grows roughly linearly with
        extracellular ionic strength---provided cation leakage is
        rate-limiting. At acidic pH, where proton leakage
        dominates, salinity has little effect.
  \item \textbf{High temperature.} Elevated temperature raises the
        ionic permeability of the
        membrane~\cite{Toyoshima1975,Driessen1996}, and minimal
        maintenance power is proportional to leakage flux, so
        thermophilic conditions pay a permeability-scaled cost
        increase across all ions.
\end{itemize}
In all four regimes, the energetic burden arises from the same
thermodynamic principle: at steady state, active transport must
compensate for free-energy dissipation by passive leakage.
Extremophily entails sustained operation closer to the
thermodynamic ceiling of maintenance power.

\paragraph{Osmotic pressure as a partial lever.}
Table~\ref{table:1} quantifies the effect of imposed osmotic
pressure. At acidic pH, $Q_T$ is insensitive to $\Delta\Pi$ (proton
leakage dominates and does not depend on osmotic pressure). At
neutral and alkaline pH, where cation leakage dominates, raising
$\Delta\Pi$ reduces the $\Delta\psi$ required for homeostasis and
hence lowers $Q_{\Delta\psi}$. Osmotic adjustment is therefore a
partial mitigation: it can modulate maintenance cost but cannot
eliminate it.

\begin{table}[h!]
\begin{center}
\begin{tabular}{|c|c|c|c|}
\hline
Environment & $\Delta \Pi = 0$ & $\Delta \Pi = 10$ atm & $\Delta \Pi = 20$ atm\\
\hline
$\text{pH}_e = 3$, $[\text{Ion}]_e = 100$ mM & $2.5 \times 10^3$  & $2.5 \times 10^3$  & $2.5 \times 10^3$  \\
\hline
$\text{pH}_e = 7$, $[\text{Ion}]_e = 100$ mM & $30$ & $25$  & $23$ \\
\hline
$\text{pH}_e = 11$, $[\text{Ion}]_e = 100$ mM & $129$ & $116$  & $112$  \\
\hline
$\text{pH}_e = 3$, $[\text{Ion}]_e = 1000$ mM & $2.5 \times 10^3$ & $2.5 \times 10^3$ & $2.5 \times 10^3$ \\
\hline
$\text{pH}_e = 7$, $[\text{Ion}]_e = 1000$ mM & $262$ & $205$  & $184$  \\
\hline
$\text{pH}_e = 11$, $[\text{Ion}]_e = 1000$ mM & $1.3 \times 10^3$  & $1.2 \times 10^3$ & $1.1 \times 10^3$\\
\hline
\end{tabular}
\caption{\label{table:1}$Q_T$ (GHK leakage) as a function of
osmotic pressure, across three extracellular pH values and two
salinities. At acidic pH$_e$, proton leakage dominates and the
cost is osmotic-insensitive. At neutral and alkaline pH$_e$,
cation leakage dominates and higher osmotic pressure lowers the
IMFs required to maintain $\Delta\psi$, reducing the cost.}
\end{center}
\end{table}

\begin{figure}[h!]
\centering
\includegraphics[scale=0.3]{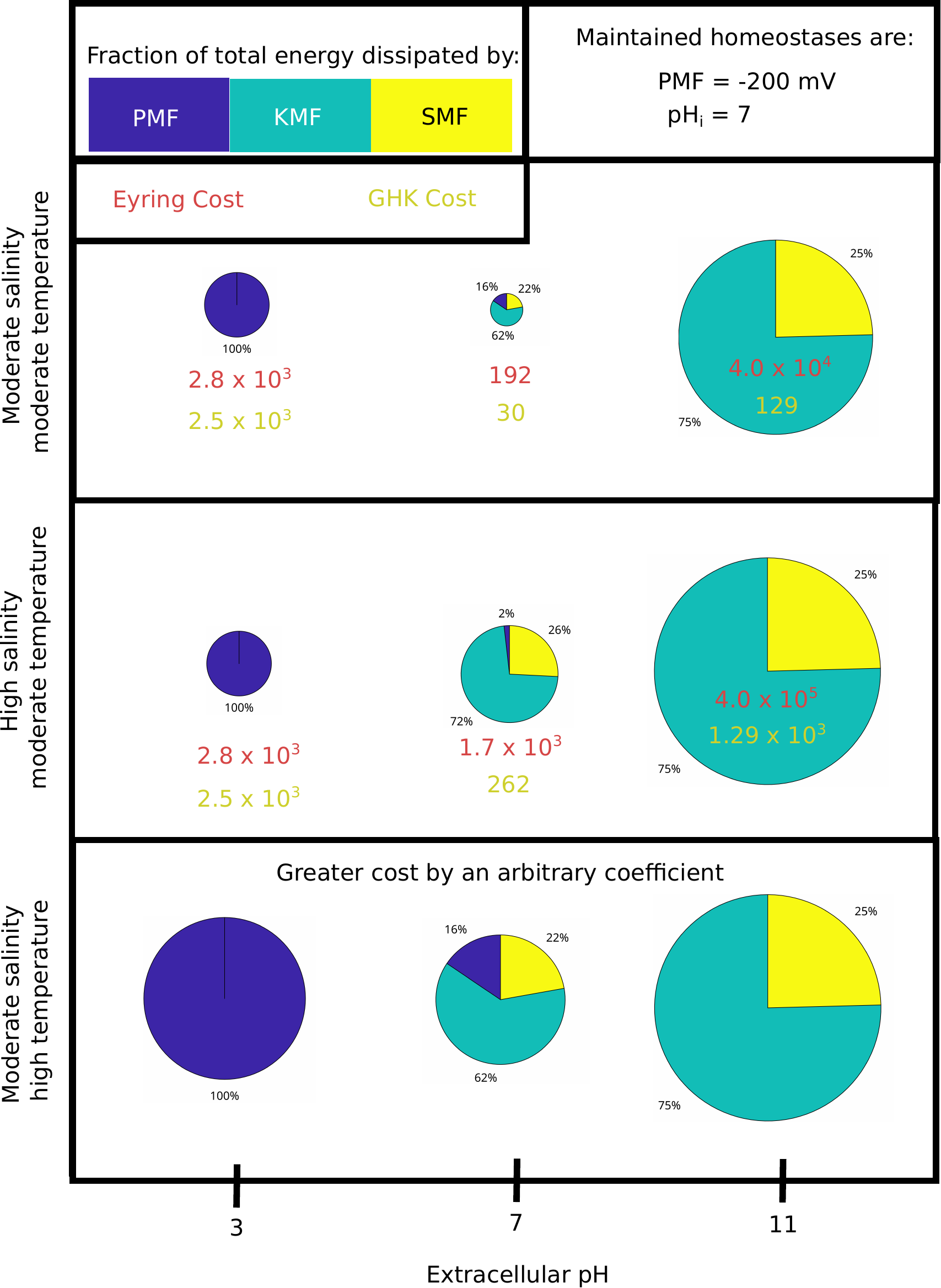}
\caption{\label{fig:3}
Extremophilic regimes require systematically elevated maintenance
power, via four distinct ionic mechanisms. Total minimal maintenance
dissipation $\min(Q_T)$ needed to sustain pH$_i = 7$ and PMF $=-200$~mV at
zero osmotic pressure, across extracellular pH, salinity, and
temperature. Average captive-charge valency in $[-2,2]$, $\beta=1$.
Moderate salinity: $[\text{Ion}]_e = 100$~mM; high salinity:
$1000$~mM. Temperature increase is modelled as a permeability
increase. Pie charts indicate the relative contributions of PMF,
KMF, and SMF to total dissipation. Red and yellow values correspond
to Eyring and GHK leakage, respectively. Acidic regimes are
dominated by proton dissipation; alkaline and high-salinity regimes
by cation dissipation.}
\end{figure}

\subsection{Maintenance cost differentiates metabolic regimes}
\label{sec:2.5}

Respiration requires a large PMF; fermentation does not. When
environmental stress makes sustaining a large PMF prohibitively
dissipative, lowering the PMF saves maintenance power---and
metabolic regimes that tolerate a reduced PMF become energetically
favored. This provides a thermodynamic rationale for the shift from
respiration to fermentation under stress.

\paragraph{The energetic asymmetry.}
Respiratory architectures synthesize ATP through ion-translocating
electron-transport chains and therefore require sustained membrane
polarization and a large PMF. Fermentative architectures rely on
substrate-level phosphorylation and can operate with smaller
transmembrane gradients. Because the maintenance cost derived in
Sections~\ref{sec:2.2}--\ref{sec:2.4} scales with the magnitude of the gradients, a
metabolic regime that demands a large PMF carries, as a byproduct, a
larger unavoidable dissipation floor. Environmental conditions that
amplify this floor---extreme pH, high salinity, elevated membrane
permeability---therefore penalize respiration more than
fermentation.

\paragraph{Direct empirical match.}
Cells reduce the magnitude of their PMF under highly alkaline
extracellular conditions, precisely the regime in which our
framework predicts that sustaining a large PMF becomes energetically
prohibitive~\cite{Terradot2024}. By lowering the PMF, the cell cuts
its maintenance burden, at the cost of a smaller driving force for
proton-coupled processes.

\paragraph{From PMF reduction to metabolic switching.}
Our framework directly predicts only that \emph{cells should operate
at a lower PMF under stress}. The further step---switching from
respiration to fermentation---is a physiological consequence: if
ATP production via respiration depends on a PMF the cell can no
longer afford, substrate-level phosphorylation becomes the
alternative. In this view, membrane physics constrains metabolic
organization: environmental conditions set the maintenance cost of
the electrophysiological configuration a given regime requires, and
that cost feeds back into the viable space of metabolic strategies
(Figure~\ref{fig:pmf_regimes}).

\paragraph{Other dimensions of the trade-off.}
Maintenance cost is only one of four factors selecting between
fermentation and respiration. The others are: (i) different
by-products and environmental feedbacks; (ii) different proteome
cost per unit ATP produced~\cite{Basan2015bis}, which favors
fermentation; (iii) different yield per substrate, which favors
respiration. Our framework adds a fourth consideration, pulling in
the same direction as (ii) under stress.

\begin{figure}[h!]
\centering
\includegraphics[scale=0.30]{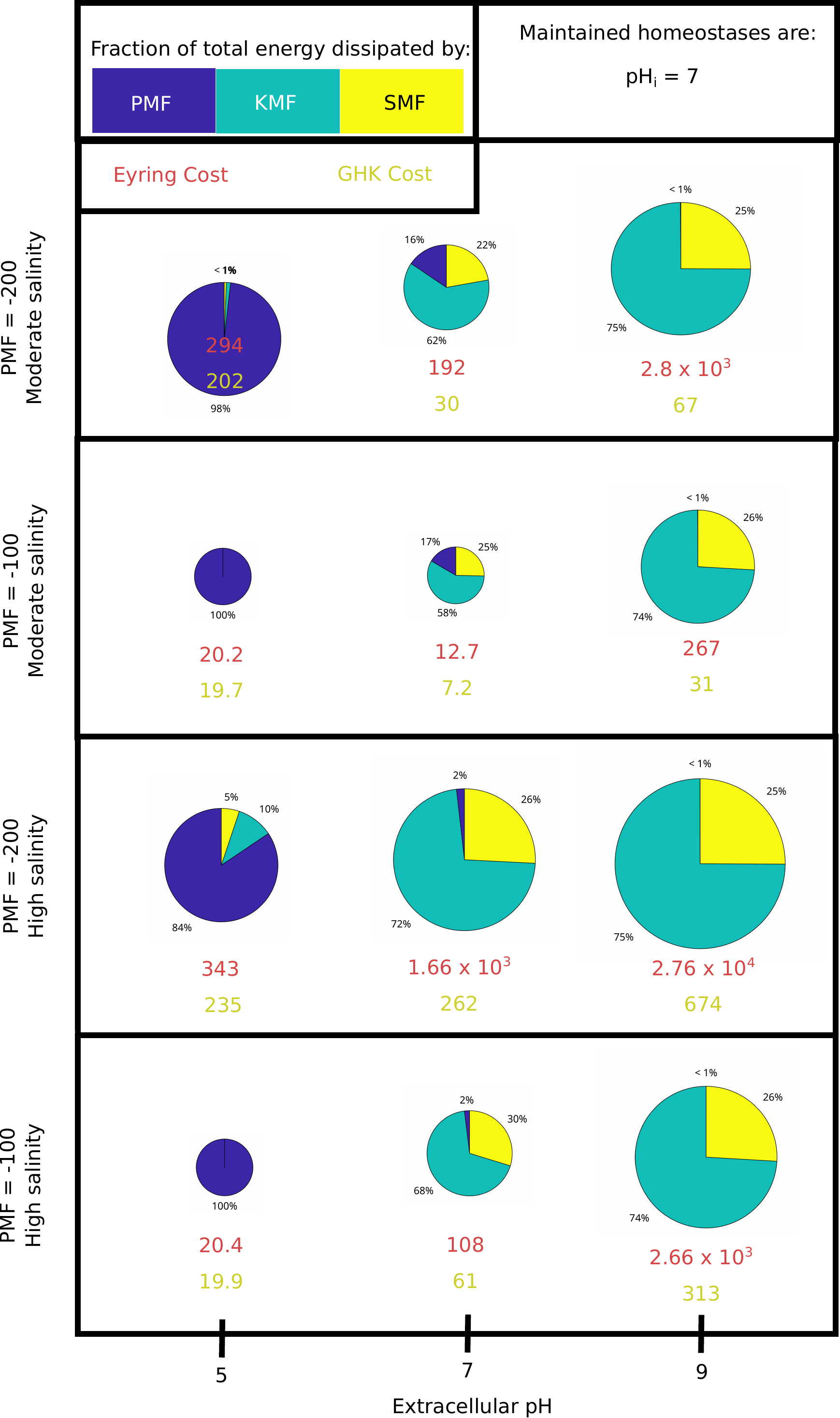}
\caption{\label{fig:pmf_regimes}
Total maintenance cost $Q_T$ needed to sustain pH$_i=7$ at PMF
$=-200$~mV (respiratory) versus PMF $=-100$~mV (reduced-gradient),
across extracellular pH and salinity. Captive charge valency in
$[-2,2]$, $\beta = 1$. Lowering the PMF reduces the $\Delta\psi$ that
must be maintained, cutting cation leakage and total dissipation---
but narrows the range of extracellular pH compatible with
intracellular neutrality~\cite{Terradot2024}. Eyring (red) and GHK
(yellow) leakage models shown.}
\end{figure}

We have assumed throughout Sections~\ref{sec:2.1}--\ref{sec:2.5} that transporters
operate at $100\%$ energetic efficiency. Section~\ref{sec:2.6} shows that this
limit is generally unattainable in practice, and that the resulting
sub-optimality is itself a key shaper of transporter diversity.

\subsection{Why perfect energetic efficiency is not attainable}
\label{sec:2.6}

Three distinct, independent constraints prevent real cells from
reaching the thermodynamic lower bound derived in Section~\ref{sec:2.1}
(i)~a trade-off between energetic and catalytic efficiency, (ii)~the
finite diversity of transporters, and (iii)~the discrete
stoichiometries and energy quanta available to ion transport. The
first constraint applies to every single transporter individually;
the second and third apply to the cell as a whole and are amplified
in heterogeneous environments.

\paragraph{(i) Energetic vs catalytic efficiency.}
Define the catalytic efficiency $c_r$ of reaction $r$ as the
fraction by which the forward flux exceeds the reverse,
\begin{equation}
\label{eq:A.36}
c_r \;=\; 1 - e^{\eta\, \Delta G_r} \;=\; 1 - \frac{j_r^{-}}{j_r^{+}}
\end{equation}
so that Eq.~\eqref{eq:flux-generic} rewrites as $j_r = j_r^{+} c_r$. High
$c_r$ requires operation far from equilibrium ($\Delta G_r$ strongly
negative); high energetic efficiency $\epsilon_r$ requires operation
near equilibrium ($\Delta G_r \to 0$). The two requirements conflict.
Solving for $\epsilon_r$ in terms of the driving force
$\Delta G_E$ and $c_r$ yields
\begin{equation}
\label{eq:A.40}
\epsilon_r \;=\; 1 - \frac{1}{\eta\, \Delta G_{E,r}}\, \ln(1 - c_r)
\end{equation}
which quantifies the trade-off (Figure~\ref{fig:4}A--B). Because
maintaining non-zero steady-state transport flux demands operation
away from equilibrium, $\epsilon_r < 1$ necessarily.

\paragraph{(ii) Finite transporter diversity.}
A transporter $r$ has a discrete driving force $\Delta G_{E,r}$ and a fixed
stoichiometry. With a single transporter, energetic efficiency
reaches $100\%$ only when the IMF to be maintained exactly matches
$\Delta G_{E,r}$ (Figure~\ref{fig:4}C). Expanding the transporter
repertoire expands the set of IMFs maintained at high efficiency
(Fig.~\ref{fig:4}D--E), but achieving near-$100\%$ efficiency across
a \emph{continuum} of IMFs would require an infinite repertoire. The
broader the range of IMFs a cell must maintain across environments,
the more transporters it needs---and the further it falls from the
bound whenever its actual repertoire falls short.

\paragraph{(iii) Discrete stoichiometries and energy quanta.}
Transport is powered by discrete energy quanta (ATP hydrolysis
$\approx -560$~mV equivalent; PMF $\approx -200$~mV) and uses
integer stoichiometries. Only a discrete set of effective driving
forces can therefore be realized, and only a corresponding discrete
set of IMFs can be maintained at exact $100\%$ efficiency
(Fig.~\ref{fig:4}F--G). Maximizing efficiency across environments
therefore requires diversity in both stoichiometry and
energy-coupling mechanism.

\paragraph{Consequence for the next section.}
Together, (i)--(iii) elevate real maintenance cost above the
Section~\ref{sec:2.1} bound; in heterogeneous environments they select for
transporter diversity. Section~\ref{sec:2.7} shows how this selection
partitions transporter architectures across taxa.

\paragraph{Model A vs.\ Model B: the flexibility-efficiency trade.}
To see the partitioning, compare the total dissipation for the two
architectures of Fig.~\ref{fig:Two_Models}. Appendix~\ref{appendix:E} shows
\begin{align}
\label{eq:Y.1}
\dfrac{Q_T^{A}}{F} &= \sum_{x \backslash \text{OH}^-}
\frac{1}{\epsilon_x}\, \Delta G_x\, j_x\\
\label{eq:Y.2}
\dfrac{Q_T^{B}}{F} &= \frac{1}{\epsilon_M}
\left(\Delta G_{\text{H}^+}\, j_{\text{H}^+} +
\sum_{x \backslash \{\text{H}^+,\text{OH}^-\}}
\frac{1}{\epsilon_x}\, \Delta G_x\, j_x\right)
\end{align}
where $\epsilon_M$ is the efficiency of the metabolically coupled proton-pumping
reaction and $\epsilon_x$ the efficiency of the sole active transport reaction moving $x$ across the membrane. In Model~B, every non-proton transport pays two
efficiency losses in series---the metabolic pump $M$ and the
proton:ion antiporter---hence the two efficiencies multiply, making
Model~B intrinsically more dissipative. However, Model~B's
effective driving force is smaller (the PMF quantum rather than the
ATP quantum, Fig.~\ref{fig:4}F--G), which permits near-$100\%$
efficiency at weaker IMFs. The result is a clean trade-off:
Model~A wins in stable environments that demand large IMFs;
Model~B wins in variable environments that sometimes demand small
IMFs, at the cost of intrinsic dissipative overhead.

\begin{figure}[h!]
\centering
\includegraphics[scale=0.25]{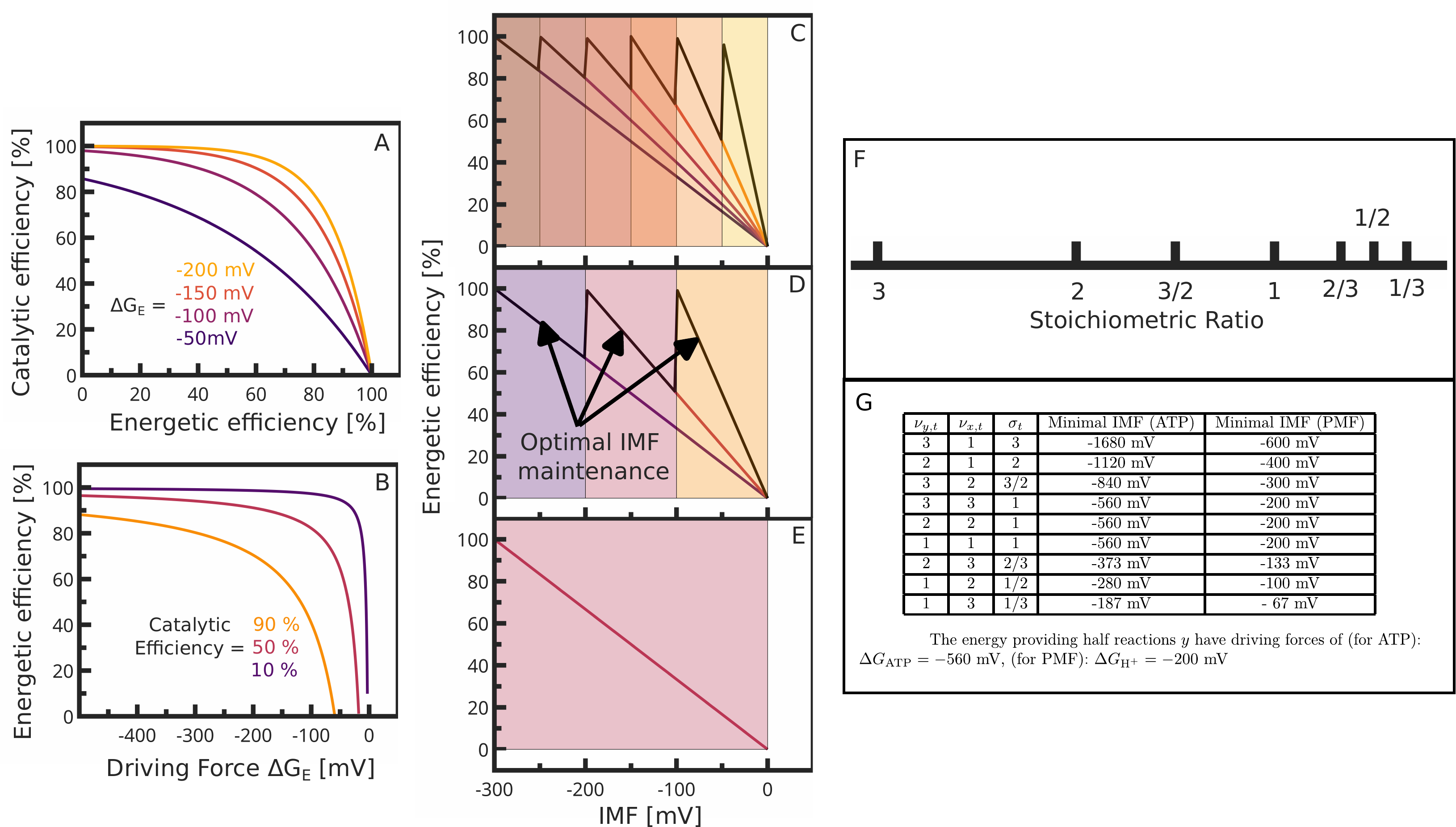}
\caption{\label{fig:4}
Real cells operate above the Section~2.1 bound for three
independent reasons. (A--B) Catalytic/energetic efficiency
trade-off: maintaining finite steady-state flux requires operation
away from equilibrium and hence $\epsilon_r < 1$. (C--E) Finite
transporter repertoires limit the range of IMFs maintained near
$100\%$ efficiency; broader environmental variability requires more
transport systems. (F--G) Discrete stoichiometries and quantized
energy sources (ATP $\approx -560$~mV; PMF $\approx -200$~mV)
restrict the achievable driving forces, so only specific IMF values
can be maintained at exact $100\%$ efficiency.}
\end{figure}

\subsection{Environmental constraints select distinct transport
architectures}
\label{sec:2.7}

Environmental pH selects qualitatively distinct transport
architectures. The range of pH a cell encounters determines the
diversity of its transporter repertoire, and the energetic cost of
the required $\Delta\psi$ dictates which energy currency
(PMF vs.\ ATP) its transporters exploit.

\paragraph{Neutrophiles.}
Neutrophiles such as \textit{E.\ coli} must maintain cytoplasmic
neutrality across a broad range of extracellular pH and therefore
carry a correspondingly broad transporter repertoire
(Fig.~\ref{fig:5}A). At strongly acidic pH they sustain a positive
$\Delta\psi$ via anion motive force (AMF) mechanisms; near
neutrality, moderate cation driving forces suffice; at alkaline pH
they need cation transporters with high driving force. Per
Eq.~\eqref{eq:Y.2}, a lower energy quantum (PMF rather than ATP)
permits the cell to operate near $100\%$ efficiency across the
widest range of IMFs, at the cost of higher intrinsic dissipation---
a choice that fits the variable-environment profile of a
neutrophile.

As shown in Section~2.3, dissipation minimization favors a weaker
KMF than SMF because $P_{\text{K}^+} > P_{\text{Na}^+}$.
Mechanistically, this predicts lower H$^+$:K$^+$ stoichiometry in
K$^+$ antiporters than H$^+$:Na$^+$ stoichiometry in Na$^+$
antiporters---a stoichiometric signature of the optimization that
should be experimentally observable.

\paragraph{Acidophiles.}
Acidophiles inhabit persistently acidic environments, where the
positive $\Delta\psi$ needed to sustain internal neutrality and
PMF is comparatively modest (Appendix~Fig.~9). A single
low-driving-force AMF system can suffice (Fig.~\ref{fig:5}B).
The trade-off is environmental range: this minimal architecture
cannot cope with neutral or alkaline conditions, so specialization
buys energetic economy at the cost of robustness.

\paragraph{Alkalinophiles (1/3): ATP-driven rather than
PMF-driven.}
Alkalinophiles face the most extreme electrophysiological
constraints. Sustaining internal pH and PMF in highly alkaline
environments requires strongly negative $\Delta\psi$
(Appendix~Fig.~\ref{Appendix_fig:figure10}), and the corresponding cation motive forces
exceed what PMF-coupled antiporters can produce in
Model~B~\eqref{eq:Y.2}. Alkalinophiles therefore frequently couple
IMF maintenance to ATP hydrolysis (Fig.~\ref{fig:5}C): the larger
energy quantum of ATP enables the required driving forces. The
cost however incurs additional efficiency losses if ATP is produced through respiration---proton extrusion, ATP
synthesis, and ATP-driven transport each cost some energy---
whereas PMF-driven transport bypasses one conversion step.

\paragraph{Alkalinophiles (2/3): lowering PMF as a partial
mitigation.}
At alkaline pH, operating at a reduced PMF lowers the $\Delta\psi$
required for homeostasis and hence the cation leakage cost. The
trade-off: a lower PMF weakens the driving force available to
proton-coupled efflux systems, which in turn limits their ability
to sustain the required SMF and KMF. Cells at alkaline pH therefore
face a structural trade-off---lower PMF saves maintenance power
but forces a shift toward ATP-coupled transport architectures.

\paragraph{Alkalinophiles (3/3): stoichiometry and the dissipation
minimum.}
The dissipation argument extends to ATP-driven systems. At steady
state, the maintenance power for KMF equals the product of the
leakage flux and the free energy invested per K$^+$ exported.
Lower K$^+$:ATP stoichiometry reduces that per-ion free-energy
investment and therefore directly lowers the steady-state power
dissipated in maintaining the K$^+$ gradient. Because the
Section~\ref{sec:2.3} optimum occurs at weaker KMFs, ATP-driven K$^+$ pumps
operating at low K$^+$:ATP stoichiometry shift membrane-potential
maintenance toward the dissipation minimum. This is a general
consequence of the Section~\ref{sec:2.3} result that holds regardless of the
energy currency.

\begin{figure}[h!]
\centering
\includegraphics[scale=0.5]{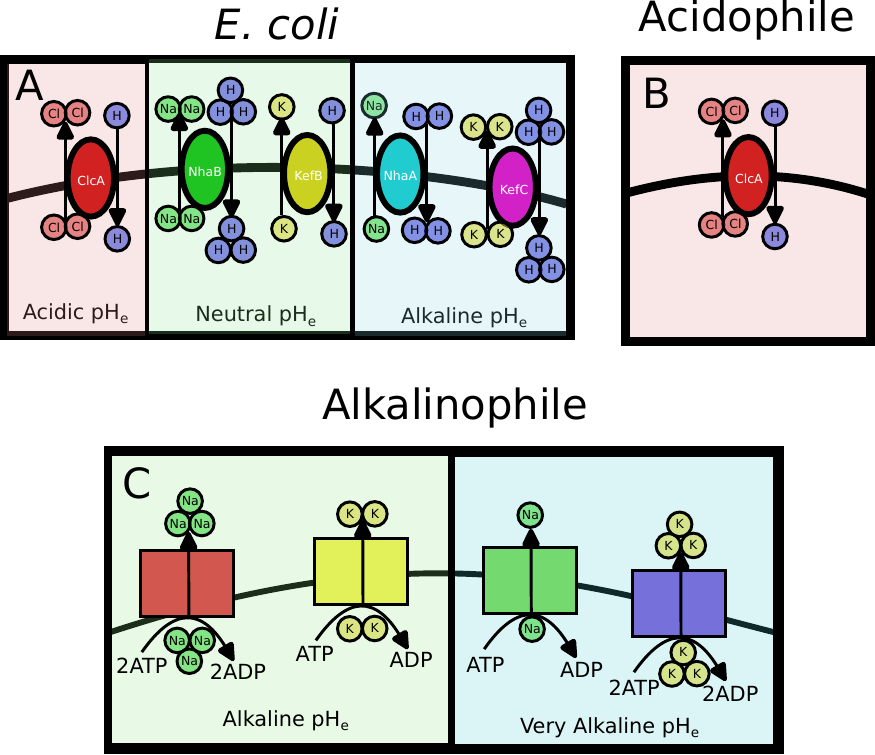}
\caption{\label{fig:5}
Transport architecture reflects the environmental pH range a cell
encounters. (A) Neutrophiles: broad repertoire, PMF-dominated,
low-to-moderate driving forces, KMF weaker than SMF. (B)
Acidophiles: minimal architecture dominated by low-driving-force
AMF systems; specialized for persistently acidic environments.
(C) Alkalinophiles: high-driving-force ATP-coupled cation efflux;
may additionally operate at reduced PMF to mitigate the
maintenance cost, shifting further toward ATP-coupled transport.
In all three regimes, the cost-minimizing architecture follows
directly from the environmental pH range and from the
Section~\ref{sec:2.3} KMF-vs-SMF stoichiometric asymmetry.}
\end{figure}
\section{Discussion}

The central result of this work is that at steady state, the
minimal energetic cost of sustaining any electrochemical gradient
equals the free-energy dissipation imposed by passive ion leakage,
under two assumptions only: steady state, and a thermodynamically
consistent rate law. It holds across transporter architectures,
stoichiometries, energy sources, and organismal identities.

\paragraph{What the framework does.}
This result turns ion homeostasis from a descriptive physiological
phenomenon into a constrained optimization problem. Membrane
permeability, not transporter regulation, sets the baseline
energetic burden that competes with growth and biosynthesis.
Transporter diversity, stoichiometry, and energy coupling do not
abolish this burden; they determine how closely cells can approach
the thermodynamic floor.

\paragraph{Intracellular composition as an optimization output.}
A central implication is that membrane potential and intracellular
ionic composition are partially optimized solutions of a
constrained dissipation-minimization problem. \emph{Permeability
asymmetry is a sufficient condition for the universal K$^+$-rich,
Na$^+$-poor cytoplasm; sodium toxicity is neither required nor
sufficient to explain the pattern.} This provides a quantitative
energetic basis for a long-recognized cross-taxa
observation~\cite{Dibrova2015}.

\paragraph{Environmental shaping of maintenance cost.}
Extracellular pH, salinity, and temperature reshape the energetic
landscape of IMF maintenance in predictable ways:
\begin{center}
\begin{tabular}{lll}
Regime & Dominant leakage & Mechanism\\ \hline
Acidophile & Protons & High $[\text{H}^+]_e$\\
Alkalinophile & Cations & Strongly negative $\Delta\psi$\\
Halophile & Cations & High $[\text{Ion}]_e$\\
Thermophile & All ions & Elevated permeability
\end{tabular}
\end{center}
Extremophily is, thermodynamically, sustained operation closer to
the ceiling of tolerable maintenance power.

\paragraph{Metabolic organization as a consequence of membrane
physics.}
The maintenance burden feeds back on metabolic strategy. Respiratory
architectures require large PMFs; fermentative architectures do not.
\emph{When environmental conditions make sustaining a large PMF
prohibitive, cells face a trade-off between lowering maintenance
dissipation and maintaining pH$_i$~\cite{Terradot2024}.} Under
sufficient stress, metabolic regimes compatible with reduced PMF
become energetically favored. Membrane permeability therefore
constrains not only transport systems but the viable space of
metabolic strategies.

\paragraph{Why the bound is unreachable in practice.}
Real cells necessarily operate above the bound: the catalytic-
energetic efficiency trade-off, finite transporter repertoires,
discrete stoichiometries, and environmental variability each
elevate the cost. Transporter diversity, in this light, is
adaptive modulation of energy quanta and stoichiometry aimed at
approaching the dissipation minimum across heterogeneous
conditions---not redundancy for its own sake.

\paragraph{Testable predictions.}
Several falsifiable predictions follow directly:
\begin{enumerate}
  \item The steady-state maintenance power for IMF preservation
        should scale with measurable ion leakage fluxes along
        environmental gradients.
  \item Organisms adapted to extreme pH or salinity should show
        higher maintenance-energy fractions even when growth is
        limited.
  \item Stress-induced PMF reductions should correlate with shifts
        toward metabolic regimes that tolerate weaker membrane
        polarization.
  \item Potassium transporters should systematically exhibit lower
        H$^+$:K$^+$ or K$^+$:ATP stoichiometry than sodium
        transporters.
\end{enumerate}
Systematic measurements of ion leakage, PMF, and maintenance energy
under controlled perturbations would directly assess how closely
living systems approach the bound. We expect that next-generation
mechanistic growth models will be able to infer these energetic
burdens from simple growth-rate and growth-yield measurements
across environmental conditions.
\medskip

Our modelling framework supports the idea 
that transporter stoichiometry, energy coupling, and metabolic
regime are not independent design features; they are mechanistic
responses to a common thermodynamic constraint imposed by membrane
permeability and the related energetic costs. The modelling 
framework 
developed here expresses this
constraint in a quantitative way. It therefore allows, independently of the specific transport architecture of an organism, 
to estimate the energetic costs of maintaining the
cell in a certain electro-physiological state (essentially internal pH,
and PMF level) in different environments
as characterised by extracellular ionic composition.

\bibliographystyle{unsrt}
\bibliography{biblio}

\appendix
\section{Constants and Parameters}\label{sec:app-constants}

The following tables collect the constant parameters used
for calculations and numerical simulations. Values refer to \textit{E.~coli} at $T = 298$~K
unless otherwise noted.\label{table:constants}

\begin{table}[h!]
\centering
\begin{tabular}{|c|c|c|c|}
\hline
\textbf{Name} & \textbf{Symbol} & \textbf{Units} & \textbf{Value}\\
\hline
Faraday constant & $F$ & C/mol & 96485\\
Gas constant     & $R$ & J/mol/K & 8.31\\
Temperature      & $T$ & K & 298\\
\hline
\end{tabular}
\caption{Physical constants.}
\end{table}

\begin{table}[h!]
\centering
\begin{tabular}{|c|c|c|c|c|}
\hline
\textbf{Name} & \textbf{Symbol} & \textbf{Units} & \textbf{Value} & \textbf{Ref.}\\
\hline
Membrane capacitance & $C_m$ & F\,m$^{-2}$ & $6.5 \times 10^{-3}$ & \cite{bai2006}\\
Cell width  & $w$ & m & $1.07 \times 10^{-6}$ & \cite{basan2015,BudaE5838}\\
Cell length & $l$ & m & $2.95 \times 10^{-6}$ & \cite{basan2015,BudaE5838}\\
\hline
\end{tabular}
\caption{Cell geometry.\label{tableConstants2}}
\end{table}

\begin{table}[h!]
\centering
\begin{tabular}{|c|c|c|c|c|}
\hline
\textbf{Name} & \textbf{Symbol} & \textbf{Units} & \textbf{Value} & \textbf{Ref.}\\
\hline
NADH reduction & $\Delta G_{\text{NADH}}$ & V & $-2.290$ & \cite{flamholz2011}\\
ATP hydrolysis & $\Delta G_{\text{ATP}}$  & V & $-0.560$ & \cite{flamholz2011}\\
\hline
\end{tabular}
\caption{Energy quanta (volts convention).}
\end{table}

\begin{table}[h!]
\centering
\begin{tabular}{|c|c|c|c|c|}
\hline
\textbf{Ion} & \textbf{Symbol} & \textbf{Units} & \textbf{Value} & \textbf{Ref.}\\
\hline
H$^+$  & $P_{\text{H}^+}$  & m/s & $10^{-4}$ & \cite{garlid1989}\footnote{Chosen to match published cation:proton permeability ratios of $10^{-6}$ to $10^{-7}$, given the Na$^+$, K$^+$, Cl$^-$ values below.}\\
OH$^-$ & $P_{\text{OH}^-}$ & m/s & $10^{-4}$ & \footnote{Set equal to $P_{\text{H}^+}$; dissipation due to OH$^-$ leakage is shown negligible in Table~\ref{table:Neglecting_OH}.}\\
Na$^+$ & $P_{\text{Na}^+}$ & m/s & $5.37\times 10^{-10}$ & \cite{costa1989}\\
K$^+$  & $P_{\text{K}^+}$  & m/s & $17.65\times 10^{-10}$ & \cite{costa1989,Robertson1983}\\
Cl$^-$ & $P_{\text{Cl}^-}$ & m/s & $1.38\times 10^{-10}$ & \cite{costa1989}\\
\hline
\end{tabular}
\caption{Membrane permeabilities. The ratio
$P_{\text{H}^+}/P_{\text{cation}}$ is consistent with published
values~\cite{garlid1989}. The relation $P_{\text{K}^+} > P_{\text{Na}^+}$
is as reported in Ref.~\cite{Robertson1983}.}
\end{table}

In addition to the constants above, the model depends on
the total extracellular ion concentration $[\text{Ion}]_e$,
the extracellular pH, the captive charge valence $z_{\text{Y}}$,
and the osmotic pressure difference $\Delta\Pi$, all of which
are varied across simulations.

\begin{table}[h!]
\centering
\begin{tabular}{|c|c|c|c|}
\hline
\textbf{Symbol} & \textbf{Units} & \textbf{Reference value} & \textbf{Range}\\
\hline
$[\text{Ion}]_e \equiv \sum_x [x]_e$ & mol/m$^3$ & 100 (= 100~mM) & 10--1000\\
pH$_e$ & -- & 7 & 3--11\\
$z_{\text{Y}}$ & -- & $-1$ & $-2$ to $+2$\\
$\Delta\Pi$ & atm & 0 & 0--20\\
\hline
\end{tabular}
\caption{Model parameters varied across simulations.}
\end{table}

\section{Cell shape}\label{sec:app-shape}
We assume as in Ref.~\cite{Terradot2024} 
that the cell is a spherocylinder of total length $l$ and width $w$.
This was determined  from microscopy images~\cite{Buda2016} and is consistent with Ref.~\cite{basan2015}.

Surface and volume are then given by:
\begin{align}
    \label{eq:A1}
    S &= \pi w l \\
    \label{eq:A2}
    V &= \pi \left(\dfrac{w}{2}\right)^2 \left(l - \dfrac{w}{3} \right)
\end{align}

In simulations we take $w \approx 1.07~\mu$m and $l \approx 2.95~\mu$m (Table~\ref{tableConstants2}), 
giving an aspect ratio close to the $3\!:\!1$ value observed across several growth conditions~\cite{TAHERIARAGHI2015}. 
When calculating osmotic pressure, we assume that cellular volume stays 
constant and ignore the mechanical properties of the cell wall.

\section{Membrane potential model}
\label{appendix:C}

We start from the single-cation membrane-voltage equation where
the (single) cation is written $\text{C}^+$,
and the anion is written $\text{A}^-$ (as in Ref.~\cite{Terradot2024})
\begin{equation}
\label{eq:C1}
\Delta\psi = \frac{FV}{SC_m}
\left(z_{\text{Y}}[\text{Y}]_i + [\text{H}^+]_i - [\text{OH}^-]_i
+ [\text{C}^+]_i - [\text{A}^-]_i\right)
\end{equation}
and extend it to the three-ion case used throughout this paper:
\begin{equation}
\label{eq:C2}
\Delta\psi = \frac{FV}{SC_m}
\left(z_{\text{Y}}[\text{Y}]_i + [\text{H}^+]_i - [\text{OH}^-]_i
+ [\text{Na}^+]_i + [\text{K}^+]_i - [\text{Cl}^-]_i\right)
\end{equation}
Using fractional extracellular concentrations
$\alpha_x \equiv [x]_e / [\text{Ion}]_e$ and the captive-charge
density $\alpha_{\text{Y}} \equiv z_{\text{Y}}[\text{Y}]_i /
[\text{Ion}]_e$, with $[\text{Ion}]_e \equiv \sum_x [x]_e$, and
defining the reference ratio
\begin{equation}
\label{eq:C6prefactor}
\lambda \;\equiv\; \frac{FV}{SC_m}\, \frac1{[\text{Ion}]_e}
\end{equation}
(which evaluates numerically to $\approx 35$ for our reference
cell geometry and $[\text{Ion}]_e=10$~mM ; see below), we rewrite
Eqs.~\eqref{eq:C1}--\eqref{eq:C2} as
\begin{align}
\label{eq:C6}
\lambda^{-1} \Delta\psi
  &= \alpha_{\text{Y}}
  + \alpha_{\text{C}^+}\, e^{\eta(\Delta G_{\text{C}^+} - \Delta\psi)}
  - \alpha_{\text{A}^-}\, e^{\eta(\Delta G_{\text{A}^-} + \Delta\psi)}\\
\label{eq:C7}
\lambda^{-1} \Delta\psi
  &= \alpha_{\text{Y}}
  + \alpha_{\text{Na}^+}\, e^{\eta(\Delta G_{\text{Na}^+} - \Delta\psi)}
  + \alpha_{\text{K}^+}\, e^{\eta(\Delta G_{\text{K}^+} - \Delta\psi)}
  - \alpha_{\text{Cl}^-}\, e^{\eta(\Delta G_{\text{Cl}^-} + \Delta\psi)}
\end{align}

Where the electrochemical potential (or Ionic motive Forces) was defined as:
\begin{equation}
    \Delta G_x = z_x \Delta \psi + \eta^{-1} \ln \left(\dfrac{[x]_i}{[x]_e}\right)
\end{equation}
where $z_x$ is the valency of ion $x$ and $\eta$ was defined in Eq.~\eqref{eq:flux-generic}.

\paragraph{Approximation used for voltage computation.}
We work with the approximation
\begin{equation}
\label{eq:C8}
\forall x \in \{\text{C}^+,\text{A}^-\}: \quad
0 \;=\; \alpha_{\text{Y}}
+ \sum_x z_x\, \alpha_x\, e^{\eta(\Delta G_x - z_x \Delta\psi)},
\end{equation}
where $\alpha_x$ is further approximated by
$[\text{CA}]_0 / (2[\text{CA}]_0) = 1/2$. The robustness of this
approximation is verified in the next paragraph.

\paragraph{Three regimes of voltage maintenance.}
\begin{description}
  \item[Donnan regime.] Setting all IMFs to zero and varying
        $\alpha_{\text{Y}}$ between $\min(\alpha_{\text{Y}})$ and
        $\max(\alpha_{\text{Y}})$ yields a band
        $[\Delta\psi_0^-,\Delta\psi_0^+]$ of membrane potentials
        achievable at zero active-transport cost:
        Eqs.~(C8A)--(C8C) [labels unchanged].
  \item[Below Donnan ($\Delta\psi < \Delta\psi_0^-$).] Set
        $\alpha_{\text{Y}}=\min(\alpha_{\text{Y}})$ and
        AMF$=0$; solve~\eqref{eq:C8} for
        $\Delta G_{\text{C}^+}$.
  \item[Above Donnan ($\Delta\psi > \Delta\psi_0^+$).] Set
        $\alpha_{\text{Y}}=\max(\alpha_{\text{Y}})$ and
        CMF$=0$; solve~\eqref{eq:C8} for
        $\Delta G_{\text{A}^-}$.
\end{description}

\paragraph{Approximated form of the membrane potential}

to demonstrate that Eq.~\eqref{eq:C8} is a robust approximation of Eq.~\eqref{eq:C1} we first compute the $\alpha_\text{Y}$, $\Delta G_{\text{C}^+}$ and $\Delta G_{\text{A}^-}$ that achieve the desired membrane potential (see previous paragraph) using the approximated form Eq.~\eqref{eq:C8}. We then input the triplet $\{\alpha_\text{Y},\Delta G_{\text{C}^+},\Delta G_{\text{A}^-}\}$ to the non approximated form Eq.~\eqref{eq:C1} and measure the difference in the approximated and non approximated form of the membrane potential, as shown in figure \ref{Appendix_fig:Approximation_DV}. For the non approximated form, ion concentrations are given by \cite{Terradot2024}:
\begin{align}
\forall \text{pH}_e < 7: [\text{A}^-]_e &= \dfrac{[\text{CA}]_0}{1 + 10^{\text{pKb}+\text{pH}_e - \text{pK}_w}} + 10^{\text{pH}_e} - 10^{\text{pH}_e - \text{pK}_w}\\
\forall \text{pH}_e > 7: [\text{C}^+]_e &= \dfrac{[\text{CA}]_0}{1 + 10^{\text{pKa} - \text{pH}_e}} - 10^{-\text{pH}_e} + 10^{\text{pH}_e - \text{pK}_w}
\end{align}
where $[\text{CA}]_0$ is the concentration of salt, \textit{e.g.} NaCl and we pick pKa = -6.3 (that of HCl), pKb = -0.56 (that of NaOH) and pK$_w$ = 14. And the amount of captive charge is given by:
\begin{equation}
z_{\text{Y}} [\text{Y}]_i = \alpha_{\text{Y}} \cdot \left([\text{C}^+]_e + [\text{A}^-]_e \right)
\end{equation}

\begin{figure}[h!]
\begin{center}
\includegraphics[scale=0.18]{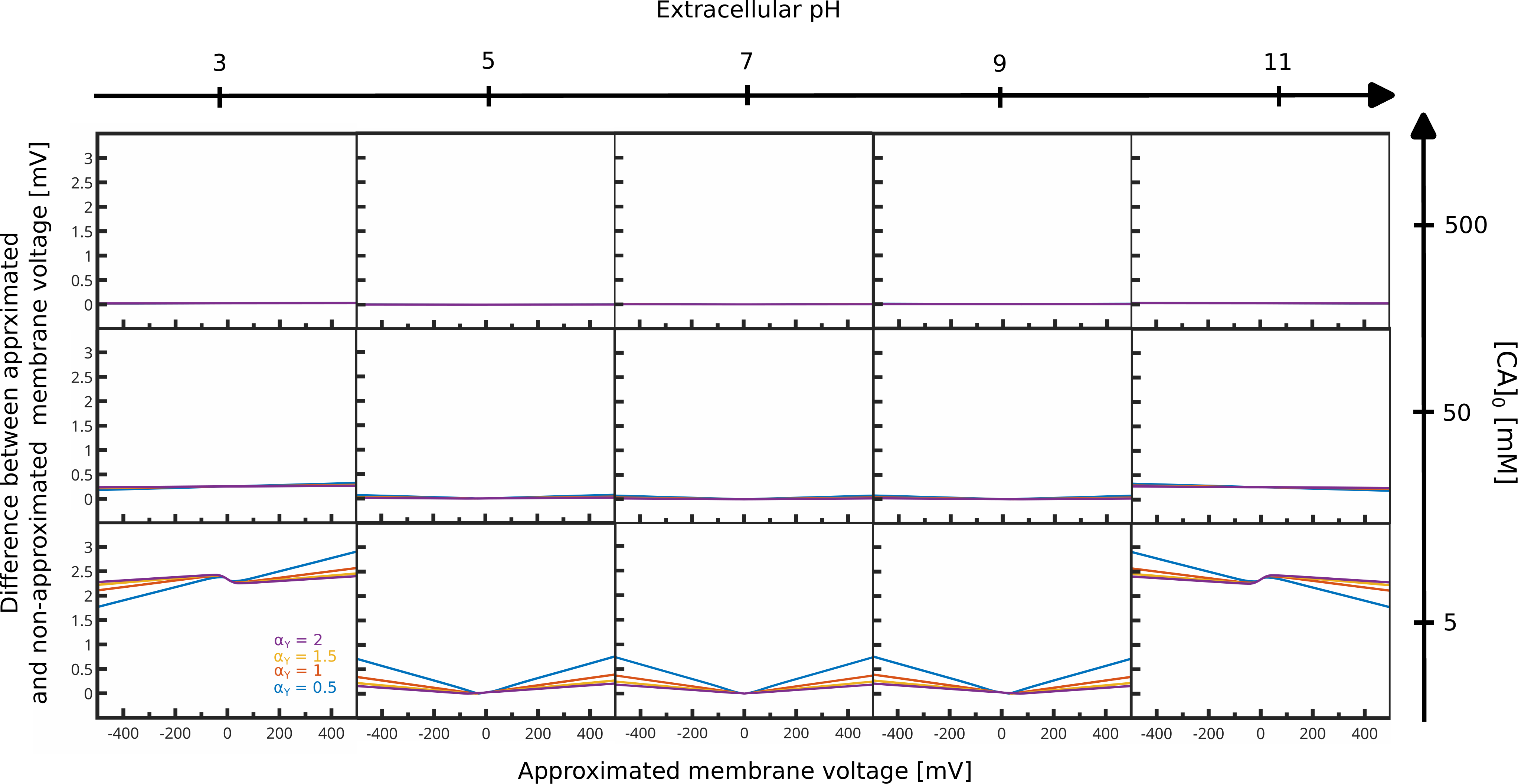}
\caption{\label{Appendix_fig:Approximation_DV} Difference between the approximated and exact membrane potential for identical inputs ${\alpha_\mathrm{Y}, \Delta G_{\mathrm{C}^+}, \Delta G_{\mathrm{A}^-}}$, shown for varying captive charge $\alpha_\mathrm{Y}$ (absolute values indicated in the figure). The membrane potential has the same sign as $\alpha_\mathrm{Y}$.}
\end{center}
\end{figure}

\paragraph{Extending the model to two cations}

We would like to evaluate whether the form Eq.~\eqref{eq:C8} is also a good approximation for the modified set of ions: $x \in \{\text{Na}^+,\text{K}^+,\text{Cl}^-\}$:
\begin{equation}
\label{eq:C9}
0 = \alpha_\text{Y} + \left(\alpha_{\text{Na}^+} \cdot e^{\eta \left(\Delta G_{\text{Na}^+} - \Delta \psi \right)} + \alpha_{\text{K}^+} \cdot e^{\eta \left(\Delta G_{\text{K}^+} - \Delta \psi \right)} -  \alpha_{\text{Cl}^-}  \cdot e^{\eta \left(\Delta G_{\text{Cl}^+} + \Delta \psi \right)} \right)
\end{equation}
We can in fact guarantee that the approximation holds equating:
\begin{align}
\label{eq:C10}
\alpha_{\text{Na}^+} \cdot e^{\eta \left(\Delta G_{\text{Na}^+} - \Delta \psi \right)} + \alpha_{\text{K}^+} \cdot e^{\eta \left(\Delta G_{\text{K}^+} - \Delta \psi \right)} &= 
\alpha_{\text{C}^+} \cdot e^{\eta \left(\Delta G_{\text{C}^+} - \Delta \psi \right)}\\
\label{eq:C11}
\alpha_{\text{Na}^+} + \alpha_{\text{K}^+} &= \alpha_{\text{C}^+} = 1/2
\end{align}
\textit{i.e.} if the approximation holds for a set of values for $\{\alpha_{\text{C}^+},\Delta G_{\text{C}^+},\Delta \psi\}$ on the right hand side, it must hold for the left-hand side since they are allocated the same value. Indeed, for each pair of values the extracellular cationic balance $\alpha_{\text{Na}^+}/\alpha_{\text{K}^+}$ and the KMF (Potassium Motive Force) take, the sodium will take the value such that the left-hand side equate the right-end side, guaranteeing the approximation holds for the set of ions $\{\text{Na}^+,\text{K}^+,\text{Cl}^-\}$. That is, as long as the approximation holds for the right hand side, that depends on the range of CMF (cation motive force) the approximation was sampled for.\\

To guarantee the left hand side is within the same range of value the right hand side took during the evaluation of the approximation we perform the numerical experiment in two steps:
\begin{enumerate}
\item First we solve Eq.~\eqref{eq:C8} in the two dimensions $\alpha_Y \in \{-2,-1,0,+1,+2\}$ and $\Delta G_{\text{C}^+} \in [-500, 0]$ mV for which the approximation was evaluated to be robust in \cite{Terradot2024}.
\item We then rearrange Eq.~\eqref{eq:C10} as to express the SMF (Sodium Motive Force) as a function of the KMF (Potassium Motive Force):
\begin{equation}
\label{eq:C12}
\Delta G_{\text{Na}^+}  =  \eta^{-1}\ln \left(\dfrac{1}{2}\dfrac{1}{\alpha_{\text{Na}^+}} \cdot e^{\eta\left(\Delta G_{\text{C}^+} - \Delta \psi\right)} - \dfrac{\alpha_{\text{K}^+}}{\alpha_{\text{Na}^+}}\cdot e^{\eta\left(\Delta G_{\text{K}^+} - \Delta \psi \right)}\right) + \Delta \psi
\end{equation}
\item and to ensure fine grained sampling of the SMF and KMF we also express the KMF as a function of the SMF:
\begin{equation}
\label{eq:C13}
\Delta G_{\text{K}^+} = \eta^{-1} \ln \left(\dfrac{1}{2}\dfrac{1}{\alpha_{\text{K}^+}} \cdot e^{\eta\left(\Delta G_{\text{C}^+} - \Delta \psi\right)} - \dfrac{\alpha_{\text{Na}^+}}{\alpha_{\text{K}^+}} \cdot e^{\eta\left(\Delta G_{\text{Na}^+} - \Delta \psi\right)}\right) + \Delta \psi
\end{equation}
and introducing the extracellular sodium to extracellular potassium ratio:
\begin{equation}
\beta = \dfrac{\alpha_{\text{Na}^+}}{\alpha_{\text{K}^+}}
\end{equation}
we transform Eqs.~\eqref{eq:C12} and \eqref{eq:C13} into:
\begin{align}
\label{eq:C14}
\Delta G_{\text{Na}^+}  =  \eta^{-1}\ln \left[\left(\dfrac{1}{\beta} +1\right) \cdot e^{\eta\left(\Delta G_{\text{C}^+} - \Delta \psi\right)} - \dfrac{1}{\beta}\cdot e^{\eta \left(\Delta G_{\text{K}^+} - \Delta \psi \right)}\right] + \Delta \psi\\
\label{eq:C15}
\Delta G_{\text{K}^+} = \eta^{-1}\ln \left[\left( 1 + \beta\right)\cdot e^{\eta \left(\Delta G_{\text{C}^+} - \Delta \psi\right)} - \beta \cdot e^{\eta \left(\Delta G_{\text{Na}^+} - \Delta \psi\right)}\right] + \Delta \psi
\end{align}
which we use to get all the combinations of SMF and KMF $\{\Delta G_{\text{Na}^+},\Delta G_{\text{K}^+}\}$ that result from a given pair CMF and membrane potential $\{\Delta G_{\text{C}^+},\Delta \psi\}$ given $\beta$, the ratio of extracellular concentrations of sodium to potassium.
\end{enumerate}

\begin{figure}
\begin{center}
\includegraphics[scale=0.3]{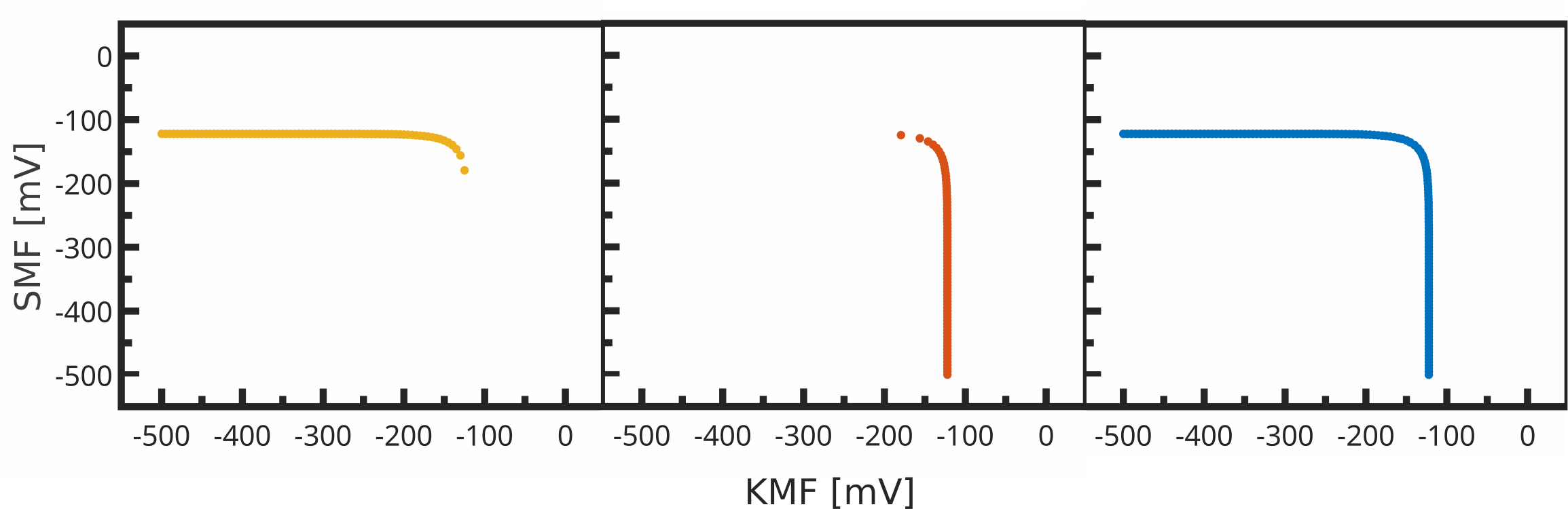}
\caption{\label{Appendix_fig:figure2}
$\beta = 1$, $\Delta \psi = -140$ mV, $\alpha_\mathrm{Y} = 0$. (Left) Sampling over $\Delta G_{\mathrm{Na}^+}$ using Eq.~\eqref{eq:C15}. (Middle) Sampling over $\Delta G_{\mathrm{K}^+}$ using Eq.~\eqref{eq:C14}. (Right) Superposition of the two samplings.}
\end{center}
\end{figure}
\section{Model for leakage}
\label{appendix:D}

We now write the rate at which ions $x$ leak across the membrane
and dissipate their motive force $\Delta G_x$. The generic
thermodynamically consistent rate law~\cite{beard2008} is
\begin{equation}
\label{eq:leak-generic}
j_r \;=\; j_r^{+}\!\left(1 - e^{\eta\, \Delta G_r}\right),
\end{equation}
where the transport mechanism---e.g.\ the number of elementary
steps~\cite{Keener:08:book}---determines the non-negative forward
rate $j_r^{+} \ge 0$. Specializing to the passive leakage of ion
$x$, the reaction free energy equals the ionic motive force
$\Delta G_x$, so
\begin{equation}
\label{eq:leak-ion}
j_x \;=\; j_x^{+}\!\left(1 - e^{\eta\, \Delta G_x}\right).
\end{equation}

\paragraph{Trapezoidal-barrier forward rate.}
We use the trapezoidal energy-profile model of
\cite{garlid1989}---a piecewise-linear approximation to the
potential-energy landscape an ion traverses crossing the
membrane---to write the forward rate:
\begin{equation}
\label{eq:leak-forward}
j_x^{+}(b,u) \;=\; \frac{S}{V}\, P_x\, [x]_e\, f_b(u),
\end{equation}
where $P_x$ is the permeability and
\begin{equation}
\label{eq:leak-udef}
u \;\equiv\; -\, z_x\, \eta\, \Delta\psi.
\end{equation}
The minus sign and the charge factor $z_x$ ensure that $u$ is
positive when the electrical contribution to leakage is outward for
ion $x$, independent of whether $x$ is a cation or an anion. The
voltage factor is
\begin{equation}
\label{eq:leak-fb}
f_b(u) \;=\; -u\, \frac{b\, e^{u/2}}{e^{-bu/2} - e^{bu/2}},
\end{equation}
where $b \in [0,1]$ parametrizes the shape of the electric-potential
drop across the membrane: $b$ characterizes
$d\Delta\psi(z)/dz$ where $z$ is position across the bilayer.

\paragraph{Two limiting cases.}
\begin{itemize}
  \item \emph{$b = 0$ (single-Eyring).} The electric potential
        $\Delta\psi(z)$ is concentrated at a single barrier in the
        middle of the bilayer, so $d\Delta\psi/dz = 0$ everywhere
        except at the geometric center. In this limit
        \eqref{eq:leak-fb} reduces to
        \begin{equation}
        \label{eq:leak-f0}
        f_0(u) \;=\; e^{u/2}.
        \end{equation}
        Note that $f_0(u)$ diverges exponentially as $u$ grows,
        reflecting the single-barrier approximation's breakdown
        at large driving forces---this divergence is the origin
        of the selective-pressure prediction in Section~2.2.
  \item \emph{$b = 1$ (GHK).} The electric potential drops
        linearly across the bilayer. This is the assumption
        underlying the Goldman--Hodgkin--Katz flux equation, and
        \eqref{eq:leak-fb} reduces to
        \begin{equation}
        \label{eq:leak-f1}
        f_1(u) \;=\; -\,\frac{u}{e^{-u} - 1},
        \end{equation}
        which is the dimensionless form of the standard GHK flux
        kernel.
\end{itemize}

\paragraph{Use in the main text.}
Equation~\eqref{eq:leak-forward} is the explicit form of the
forward rate $j_x^{+}$ used in the main text (Eq.~3 of the main
text) and underlies every appearance of $f_b(u)$ in
Sections~2.2--2.7.

\section{Lower Bound on energy expenditure}
\label{appendix:E}

Recall we consider two models:
Model~A (transporters powered directly by
intracellular metabolism) or Model~B (proton-coupled transporters
with metabolic proton extrusion). The following theorem holds for both.
In this theorem, we 
assume as usual: (i) steady state, and (ii)
thermodynamically consistent flux laws
(Eq.~\eqref{eq:leak-generic}). Recall we are working (whichever model we use) with three types of transport reactions:

- leakage reactions $x_e \to x_i$

- coupled transport reactions of which there are two types:
PMF-coupled transport of the form \[H^+_e + x_i \to H^+_i + x_e\]
and metabolism-coupled transport \[E^\star + x_i \to E + x_e\]
and that $Q_T$ is defined as the total amount of energy diverted from intracellular metabolism to maintain a certain electrophysiological state:
\begin{equation}
\label{eq:QT_def}
Q_T \equiv -F \sum_{r \in \text{MCT}} \Delta G_{E,r} j_r
\end{equation}
where the sum is over all metabolically coupled transport reactions and where $\Delta G_{E,r}$ is the amount of energy consumed per turnover of the reaction $r$.

\begin{theorem}[Universal lower bound]\label{thm:bound}
Suppose each metabolism-coupled transport reaction $r$ is such that
$\Delta G_{E,r} \le 0$ (its energy-providing half-reaction is spontaneous).

Then the total power density consumed by metabolism-coupled transport reactions
is lower-bounded:
\begin{equation}
\label{eq:theorem-bound}
Q_T \;\ge\; - F \sum_{x \backslash \{\text{OH}^-\}} \Delta G_x\, j_x \;\ge\; 0
\end{equation}
with equality attained in the limit of $100\%$ energetic
efficiency of all transport reactions.
\end{theorem}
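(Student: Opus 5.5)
Three identities drive the argument: steady-state flux balance for each ion, the second law applied to every individual transport reaction, and the second law applied to every leakage reaction.

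\emph{Step 1: steady-state balance.} At steady state the net accumulation of each ion $x\neq\text{OH}^-$ vanishes. Leakage $x_e\rightleftharpoons x_i$ moves $j_x$ inward. Each transport reaction $r$ moves $\nu_{x,r}j_r$ outward. In Model~B the antiporters also move protons inward. This gives, per ion,
\[
j_x=\sum_r \nu_{x,r}\,j_r .
\]
For protons in Model~B the balance reads $j_{\text{H}^+}+\sum_{r\in\text{anti}}\nu_{\text{H}^+,r}j_r=\nu_{\text{H}^+,M}j_M$. Water dissociation (Eqs.~\eqref{eq:Z.2},\eqref{eq:Z.5}) couples $\text{H}^+$ and $\text{OH}^-$. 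I would handle it by noting that it is at equilibrium, or by discarding the $\text{OH}^-$ terms as the statement does. Hydroxide dissipation is treated as negligible (Table~\ref{table:Neglecting_OH}), so I fold it away.

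\emph{Step 2: second law reaction by reaction.} Write $\Delta G_r$ for the total free energy of each reaction. For a metabolism-coupled reaction, $\Delta G_r=\Delta G_{E,r}+\nu_{x,r}\Delta G_x$, up to the sign convention that exporting $x$ costs $-\Delta G_x$. For an antiporter, $\Delta G_r=\nu_{\text{H}^+,r}\Delta G_{\text{H}^+}+\nu_{x,r}\Delta G_x$ with the corresponding signs. The rate law \eqref{eq:leak-generic} with $j_r^+\ge0$ gives $j_r\,\Delta G_r\le 0$ for every reaction, with equality iff $\Delta G_r=0$, i.e.\ at $\epsilon_{x,r}=1$ in the sense of \eqref{eq:Z.6}--\eqref{eq:Z.7}. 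Summing these inequalities over all transport reactions, the PMF terms of the antiporters cancel against the metabolic proton pump through the proton balance of Step~1. What remains is
\[
-\sum_{r\in\text{MCT}}\Delta G_{E,r}j_r\;\ge\;-\sum_{x\neq\text{OH}^-}\Delta G_x\,j_x .
\]
Here every transport flux has been converted into a leak flux via Step~1. Multiplying by $F$ gives the first inequality, and saturation at $\epsilon=1$ gives the equality claim.

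\emph{Step 3: nonnegativity.} The second inequality follows by applying the same rate law to each leakage reaction: $j_x=j_x^+(1-e^{\eta\Delta G_x})$ with $j_x^+\ge0$ forces $\Delta G_x j_x\le0$. The hypothesis $\Delta G_{E,r}\le0$ is not needed for the chain of inequalities itself. It guarantees that $Q_T$ is a genuine consumption and that the MCT reactions run forward. It also makes the efficiencies in \eqref{eq:Z.6} lie in $[0,1]$.

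\emph{Main obstacle.} The delicate step is sign bookkeeping in Step~2 for Model~B. I must check that the proton-motive contributions telescope exactly: metabolic pump, antiporter proton influx and proton leak must combine so that only $\Delta G_{\text{H}^+}j_{\text{H}^+}$ survives. I would first verify this by writing both models as a stoichiometric matrix $N$ with steady state $N j=0$. Then $\sum_r \Delta G_r j_r=\sum_{\text{MCT}}\Delta G_{E,r}j_r+\sum_x\Delta G_x j_x$ becomes an identity, $\Delta G=N^\top\mu$ contracted with $j$, and the proof follows uniformly for any architecture.
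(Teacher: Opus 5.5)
Your proof is correct in substance and takes a genuinely different route from the paper's. The paper works ion by ion. Lemma~\ref{lem:per-ion} uses $j_x=\nu_{x,r}j_r$ and the rate law to confine $\Delta G_{E,r}/\nu_{x,r}\le\Delta G_x\le0$, which gives $0\le\epsilon_{x,r}\le1$ and $Q_x=-F\Delta G_xj_x/\epsilon_{x,r}$. Model~B additionally needs Proposition~\ref{prop:negative-pmf} and a cascade through the proton balance. You instead sum the second law over all reactions and use the steady state only once, via $\Delta G=N^{\top}\mu$.

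Your route needs none of the following: $\Delta G_{E,r}\le0$ (you are right that it only serves $\epsilon\in[0,1]$), one transporter per ion, or the sign of the PMF. It also shows directly that $Q_T+F\sum_x\Delta G_xj_x$ is the transport dissipation $-F\sum\Delta G_rj_r$. The paper's route buys two things used in Section~\ref{sec:2.6}: the efficiency-resolved formulas \eqref{eq:Y.1}--\eqref{eq:Y.2} (the $1/\epsilon_M$ multiplying all of Model~B), and the feasibility windows on $\Delta G_x$.

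Two points need repair.

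\emph{Signs.} With the paper's conventions $\Delta G_r=\Delta G_{E,r}-\nu_{x,r}\Delta G_x$ and $\Delta G_r=\nu_{\text{H}^+,r}\Delta G_{\text{H}^+}-\nu_{x,r}\Delta G_x$ (Eqs.~\eqref{eq:W.3b}, \eqref{eq:W.13}), your identity should read $\sum_{\text{transport}}\Delta G_rj_r=\sum_{\text{MCT}}\Delta G_{E,r}j_r-\sum_x\Delta G_xj_x$. If you also sum over the leaks, the ion terms cancel altogether. With your plus sign, $\sum\Delta G_rj_r\le0$ would give the wrong bound $Q_T\ge F\sum_x\Delta G_xj_x$. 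Your displayed conclusion is right, so this is a slip, but it is exactly the bookkeeping you flagged.

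\emph{Water dissociation.} Let $\Delta G_w$ be its free energy and $j_w=-j_{\text{OH}^-}$ its net rate, by \eqref{eq:OH.1}. You cannot remove it by calling it ``at equilibrium''. In a medium with $[\text{H}^+]_e[\text{OH}^-]_e=K_w$ one has $\Delta G_w=\Delta G_{\text{H}^+}+\Delta G_{\text{OH}^-}$. Exact equilibrium then gives $j_w=0$, hence $j_{\text{OH}^-}=0$, hence $\Delta G_{\text{OH}^-}=0$ and $\Delta G_{\text{H}^+}=0$: no PMF at all. Quasi-equilibrium, on the other hand, leaves $-j_{\text{OH}^-}$ in the proton balances \eqref{eq:W.7} and \eqref{eq:W.16}.

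The fix lies within your own framework. Keep the OH$^-$ leak and $2\text{H}_2\text{O}\rightleftharpoons\text{H}^+_i+\text{OH}^-_i$ as columns of $N$. The same medium assumption makes the H$^+$/OH$^-$ reservoir term vanish, and
\[
Q_T=-F\sum_{\text{all }x}\Delta G_xj_x-F\sum_{\text{transport}}\Delta G_rj_r-F\,\Delta G_wj_w\;\ge\;-F\sum_{x\backslash\text{OH}^-}\Delta G_xj_x ,
\]
because every dropped term, including $-F\Delta G_{\text{OH}^-}j_{\text{OH}^-}$, is nonnegative by its own rate law. This proves the theorem's inequality rigorously, without Fig.~\ref{fig:sign_DGOH} or Table~\ref{table:Neglecting_OH}.

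Using $j_w=-j_{\text{OH}^-}$, the hydroxide-leak and water terms sum to $F\Delta G_{\text{H}^+}j_{\text{OH}^-}$. This is exactly the residual in \eqref{eq:A-bound} and \eqref{eq:B-bound}. So the equality claim, in your proof as in the paper's, holds only up to this term. It also holds only as a limit, since $\Delta G_r=0$ forces $j_r=0$.
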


The remainder of this section proves Theorem~\ref{thm:bound} for
each model. The proofs share a single lemma, stated first.

\begin{lemma}[Consistency $\Rightarrow$ per-ion bound]\label{lem:per-ion}
This lemma holds for any ion moved across the membrane by a single active transport reaction (whether PMF-coupled or metabolism-coupled) and by leakage (for model A and B, all ions but protons and hydroxide ions). 
Consider a transport reaction $r$ with $E^\star_r \rightleftharpoons E_r$ consuming $\Delta G_{E,r} \le 0$ amount of energy per reaction turnover, coupled
at stoichiometry $\nu_{x,r}$ to the leakage of ion $x$:
\begin{equation}
\label{eq:W.0b}
r = E^\star + \nu_{x,r}x_i \rightleftharpoons E + \nu_{x,r}x_e
\end{equation}
with steady-state given by
\begin{equation}
\label{eq:W.4}
j_x - \nu_{x,r}j_r = 0
\end{equation} 
The steady-state constraint forces
\begin{equation}
\label{eq:W.4b}
\Delta G_{E,r}/\nu_{x,r} \le \Delta G_x \le 0
\end{equation}

Defining the
energetic efficiency
\begin{equation}
\label{eq:W.1}
\epsilon_{x,r} \equiv \nu_{x,r}\Delta G_x / \Delta G_{E,r}
\end{equation}
we have
$0 \le \epsilon_{x,r} \le 1$, and the per-ion power density (positive)
\begin{equation}
\label{eq:W.2}
Q_x \equiv -F \Delta G_{E,r} j_r
\end{equation}
satisfies
\begin{equation}
\label{eq:W.3}
Q_x \;=\; -\frac{F}{\epsilon_{x,r}}\,\Delta G_x\, j_x \;\ge\;
-F\, \Delta G_x\, j_x
\end{equation}
\end{lemma}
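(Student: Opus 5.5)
The plan is to derive the sign constraints~\eqref{eq:W.4b} from the thermodynamically consistent rate law~\eqref{eq:leak-generic}. The efficiency bounds and the power identity~\eqref{eq:W.3} then follow by direct algebra using the steady-state relation~\eqref{eq:W.4}.

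\emph{Step 1 (free energy of the coupled reaction).} First I write the reaction free energy of $r$ in~\eqref{eq:W.0b}. Moving $\nu_{x,r}$ ions from inside to outside is the reverse of leakage, so it contributes $-\nu_{x,r}\Delta G_x$. The half-reaction contributes $\Delta G_{E,r}$. Hence
\[
\Delta G_r = \Delta G_{E,r} - \nu_{x,r}\Delta G_x .
\]
(In the PMF-coupled case, $\Delta G_{E,r}$ is replaced by $\nu_{\text{H}^+,r}\Delta G_{\text{H}^+}$, and the argument is identical.)

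\emph{Step 2 (sign constraints).} By~\eqref{eq:leak-generic} with $j^+\ge 0$, the net flux satisfies $\operatorname{sign}(j_r) = -\operatorname{sign}(\Delta G_r)$ and $\operatorname{sign}(j_x)=-\operatorname{sign}(\Delta G_x)$, assuming strictly positive forward rates. The steady state~\eqref{eq:W.4} with $\nu_{x,r}>0$ forces $j_x$ and $j_r$ to share a sign.

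I expect the main obstacle here: ruling out the ``reversed'' configuration, in which $\Delta G_x>0$ (ions leak outward and the transporter runs backward, synthesizing $E^\star$). I would exclude it as follows. If $\Delta G_x>0$, then $j_x<0$, so $j_r<0$, so $\Delta G_r>0$, i.e.\ $\Delta G_{E,r} > \nu_{x,r}\Delta G_x>0$. This contradicts $\Delta G_{E,r}\le 0$. Hence $\Delta G_x\le 0$, so $j_x\ge 0$, $j_r\ge 0$ and $\Delta G_r\le 0$. The last inequality is exactly $\Delta G_{E,r}\le \nu_{x,r}\Delta G_x$, which gives~\eqref{eq:W.4b}.

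The degenerate case $\Delta G_x = 0$ (so $j_x=j_r=0$) is handled separately. There both sides of~\eqref{eq:W.3} vanish, and the efficiency is taken as $0$ or by continuity.

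\emph{Step 3 (efficiency and power).} Dividing~\eqref{eq:W.4b} by $\Delta G_{E,r}/\nu_{x,r}\le 0$ reverses the inequalities and yields $0\le \epsilon_{x,r}\le 1$. Next, substituting $j_r = j_x/\nu_{x,r}$ into~\eqref{eq:W.2} gives
\[
Q_x = -F\,\frac{\Delta G_{E,r}}{\nu_{x,r}}\,j_x = -\frac{F}{\epsilon_{x,r}}\,\Delta G_x\,j_x ,
\]
using the definition~\eqref{eq:W.1}. Since $\Delta G_x j_x\le 0$ by Step~2, we have $-F\Delta G_x j_x\ge 0$. Combined with $1/\epsilon_{x,r}\ge 1$, this gives the inequality in~\eqref{eq:W.3}, with equality exactly when $\epsilon_{x,r}=1$.
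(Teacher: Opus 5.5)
Your proof is correct and follows essentially the same route as the paper. You write $\Delta G_r=\Delta G_{E,r}-\nu_{x,r}\Delta G_x$, combine the sign structure of the rate law with the steady state $j_x=\nu_{x,r}j_r$ to get $\Delta G_{E,r}/\nu_{x,r}\le\Delta G_x\le 0$, and substitute $j_r=j_x/\nu_{x,r}$ into $Q_x$. The differences are minor refinements of points the paper leaves implicit: you derive the lower bound directly from $j_r\ge 0$ rather than by contradiction, and you spell out the strictly positive forward rates, the degenerate case, and the sign $\Delta G_x j_x\le 0$ used in the last step.
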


\begin{proof}[Proof of Lemma~\ref{lem:per-ion}]
\emph{Upper bound on $\Delta G_x$.} If $\Delta G_x > 0$, then
$\Delta G_r  < 0$ where
\begin{equation}
\label{eq:W.3b}
\Delta G_r = \Delta G_{E,r} - \nu_{x,r}\Delta G_x
\end{equation}
and by
Eq.~\eqref{eq:leak-generic} $j_x < 0$ and $j_r > 0$, violating the
steady-state constraint~\eqref{eq:W.4}

Hence
$\Delta G_x \le 0$.
\emph{Lower bound on $\Delta G_x$.} Symmetrically, if
$\Delta G_x < \Delta G_{E,r}/\nu_{x,r}$, then $\Delta G_r > 0$, so
$j_x > 0$ and $j_r < 0$, again violating steady state. Hence
$\Delta G_x \ge \Delta G_{E,r}/\nu_{x,r}$. The efficiency bounds
follow immediately from the definition. Finally, substituting the
steady-state condition Eq.~\eqref{eq:W.4} into the definition of $Q_x$ Eq.~\eqref{eq:W.2} yields
\begin{equation}
\label{eq:W.5}
Q_x = -F \Delta G_{E,r} \dfrac{1}{\nu_{x,r}} j_x
\end{equation}
and substituting the stoichiometric ratio above using Eq.~\eqref{eq:W.1} gives:
\begin{equation}
\label{eq:W.6}
Q_x = -F\Delta G_x j_x / \epsilon_{x,r}
\end{equation}, and $\epsilon_{x,r} \le 1$ gives
the stated inequality~\eqref{eq:W.3}.
\end{proof}

\paragraph{Computing the electrochemical potential for hydroxide ions}

we write the steady state for hydroxide ions, including water hydrolysis and OH leak:
\begin{equation}
\label{eq:OH.1}
0 = j_{\text{OH}^-} + f_w [\text{H}_2\text{O}]_i^2 - b_w [\text{H}_3\text{O}^+]_i [\text{OH}^-]_i
\end{equation} 
where if $[\text{H}_2\text{O}]_i = 1000/18 M$ and $b_w$, the rate of association of hydronium and hydroxide ions, is diffusion limited, so that $b_w = 10^{10} \text{M}^{-1} \text{s}^{-1}$, then $f_w = K_w b_w / [\text{H}_2\text{O}]_i^2$ or approximately $3.2 \times 10^{-8}$ M$^{-1}$ s$^{-1}$, assuming $K_w = 10^{-14}$ M$^2$, see \cite{Terradot2024}.
Depending on the model for leakage $[\text{OH}^-]_i$ is obtained by solving
\begin{equation}
\label{eq:OH.2}
\begin{split}
0 = \dfrac{S}{V} P_{\text{OH}^-} [\text{OH}^-]_e e^{\eta \Delta \psi / 2} \left(1 - \exp \left[-\eta \Delta \psi +  \ln \left(\dfrac{[\text{OH}^-]_i}{[\text{OH}^-]_e} \right)\right] \right)\\ + f_w [\text{H}_2\text{O}]_i^2 - b_w [\text{H}_3\text{O}^+]_i [\text{OH}^-]_i
\end{split}
\end{equation}

\begin{equation}
\label{eq:OH.3}
\begin{split}
0 = -\dfrac{S}{V} P_{\text{OH}^-} [\text{OH}^-]_e \dfrac{\eta \Delta \psi}{e^{-\eta \Delta \psi} - 1} \left(1 - \exp \left[-\eta \Delta \psi +  \ln \left(\dfrac{[\text{OH}^-]_i}{[\text{OH}^-]_e} \right)\right] \right)\\ + f_w [\text{H}_2\text{O}]_i^2 - b_w [\text{H}_3\text{O}^+]_i [\text{OH}^-]_i
\end{split}
\end{equation}
where we used Erying model Eq.~\eqref{eq:leak-fb} for the first equation and GHK model Eq.~\eqref{eq:leak-f1} for the second. We find that the electrochemical potential for hydroxide ion $\Delta G_{\text{OH}^-} = z_x \Delta \psi + \eta^{-1} \ln \left(\dfrac{[\text{OH}^-]_i}{[\text{OH}^-]_e} \right)$ is of opposite sign than that of the proton motive force, see figure \ref{fig:sign_DGOH}.

\subsection{Model A: direct metabolic coupling}

Having derived the per-ion power density Eq.~\eqref{eq:W.6} for all ions but protons and hydroxide ions, we next turn to the derivation of the power density for protons, which differs depending on whether we consider Model A (where transport of all ions is metabolically-coupled) or Model B (where only proton transport is metabolically-coupled and all other ion transport is PMF-coupled).

\paragraph{Model's Reactions}

In Model A, we assume that, on top of water hydrolysis, there is for each ion (besides hydroxide ions), a single transport reaction\footnote{assuming two
reactions would give similar results, indeed minimizing energetic expenditures
implies that of two reactions, only one is active given a certain electrochemical
state \cite{Terradot2024}.} that couples a spontaneous half-reaction from internal metabolism $E^\star_r \rightleftharpoons E_r$ (\textit{e.g.} ATP hydrolysis or NADH reduction) to the export of ion $x$
\begin{align}
\label{rec:A.1}
\forall x \in \{\text{H}^+,\text{Na}^+,\text{K}^+,\text{Cl}^-\}: E^\star_r + \nu_{x,r} x_i &\rightleftharpoons E_r + \nu_{x,r} x_e\\
\label{rec:A.2}
2 \text{H}_2\text{O} &\rightleftharpoons \text{H}_3\text{O}^+ + \text{OH}^-
\end{align}

\paragraph{Non-proton ions.}
For every $x \in \{\text{Na}^+,\text{K}^+,\text{Cl}^-\}$ the
transport reaction~\eqref{rec:A.1} and the steady-state
condition~\eqref{eq:W.4} reduce directly to the hypotheses of
Lemma~\ref{lem:per-ion}, yielding
$Q_x \ge -F\, \Delta G_x\, j_x$.

\paragraph{Protons and hydroxide ions.}
For protons, water hydrolysis couples $\text{H}^+$ and
$\text{OH}^-$ steady states which modifies the steady-state equation Eq.~\eqref{eq:W.4} for protons to
\begin{equation}
\label{eq:W.7}
0 = j_{\text{H}^+} - j_{\text{OH}^-} - \nu_{\text{H}^+,r} j_r
\end{equation}
where we substituted the term for water hydrolysis by hydroxide ion leak using Eq.~\eqref{eq:OH.1}.
Repeating the
Lemma~\ref{lem:per-ion} argument on the modified steady-state
equation Eq.~\eqref{eq:W.7}, given $j_{\text{OH}^-}$ is of opposite sign to $j_{\text{H}^+}$ (opposite electrochemical potential, see Fig.~\ref{fig:sign_DGOH}) gives
\begin{equation}
\label{eq:W.8}
\dfrac{\Delta G_{E,r}}{\nu_{\text{H}^+,r}} \leq \Delta G_{\text{H}^+} \leq 0    
\end{equation}
and the energetic efficiency is bounded
\begin{equation}
\label{eq:W.9}
\epsilon_{\text{H}^+,r} \equiv \dfrac{\nu_{\text{H}^+,r} \Delta G_x}{\Delta G_{E,r}} \in [0,1]
\end{equation}
. The power density that reaction $r$ that generates the PMF then follows as:
\begin{equation}
\label{eq:W.10}
Q_{\text{H}^+} = -F \Delta G_{E,r} j_r
\end{equation}
substituting $j_r$ using Eq.~\eqref{eq:W.7}, injecting the energetic efficiency by substituting $\Delta G_{E,r}$ from Eq.~\eqref{eq:W.9} we find
\begin{equation}
\label{eq:W.11}
Q_{\text{H}^+} = -F \Delta G_{E,r} \dfrac{1}{\nu_{\text{H}^+,r}} \left(j_{\text{H}^+} - j_{\text{OH}^-}\right) = -F \Delta G_{\text{H}^+} \dfrac{1}{\epsilon_{\text{H}^+,r}} \left(j_{\text{H}^+} - j_{\text{OH}^-}\right)
\end{equation}
Applying the definition~\eqref{eq:QT_def} to model A, we get:
\begin{equation}
\label{eq:W.11b}
Q_T^A \equiv -F \sum_{r \in \text{MCT}}  \Delta G_{E,r} j_r 
\end{equation}
where we have for each ion a metabolically-coupled transport (MCT) reaction, see reaction~\eqref{rec:A.1}. Going from Eq.~\eqref{eq:W.2} to Eq.~\eqref{eq:W.6} for all non proton ions and going from Eq.~\eqref{eq:W.10} to Eq.~\eqref{eq:W.11} for proton ions, $Q_T^A$ becomes
\begin{equation}
\label{eq:W.11c}
Q_T^A = - F \left(\Delta G_{\text{H}^+} \dfrac{1}{\epsilon_{\text{H}^+,r}} \left(j_{\text{H}^+} - j_{\text{OH}^-}\right) + \sum_{x \backslash \{\text{H}^+, \text{OH}^-\}} \dfrac{1}{\epsilon_{x,r}} \Delta G_x j_x 
\right)
\end{equation}
And the above is bounded by assuming maximal energetic efficiency for all reactions, $\forall x: \epsilon_{x,r} = 1$:
\begin{equation}
\label{eq:A-bound}
Q_T^A \;\ge\; -F\left(-\Delta G_{\text{H}^+}\, j_{\text{OH}^-}
+ \sum_{x\backslash \text{OH}^-} \Delta G_x\, j_x\right)
\end{equation}
Table~\ref{table:Neglecting_OH} shows that the hydroxide term is
always $< 6\%$ of the total, justifying the simplified form used
in the main text,
\begin{equation}
\label{eq:W.12}
Q_T^A \;\ge\; -F \sum_{x\backslash \text{OH}^-} \Delta G_x\, j_x.
\end{equation}

\subsection{Model B: proton-coupled transport}

\paragraph{Model's Reactions}

In Model B, we assume that, on top of water hydrolysis, there is for each ion (besides hydroxide ions and protons), a single transport reaction that couples proton entry $\text{H}^+ \rightleftharpoons \text{H}^+_i$ to the export of ion $x$ and for proton a single spontaneous reaction from internal metabolism $E_M^\star \rightleftharpoons E_M$(\textit{e.g.} ATP hydrolysis or NADH reduction) 
\begin{align}
\label{rec:B.1}
\forall x \in \{\text{Na}^+,\text{K}^+,\text{Cl}^-\}: \nu_{\text{H}^+,r} \text{H}^+_e + \nu_{x,r} x_i &\rightleftharpoons \nu_{\text{H}^+,r} \text{H}^+_i + \nu_{x,r} x_e\\
\label{rec:B.2}
E_M^\star + \nu_{\text{H}^+,M} \text{H}^+_i &\rightleftharpoons E_M + \nu_{\text{H}^+,M} \text{H}^+_e\\
\label{rec:B.3}
2 \text{H}_2\text{O} &\rightleftharpoons \text{H}_3\text{O}^+ + \text{OH}^-
\end{align}
where the potentials of reaction are defined as
\begin{align}
\label{eq:W.13}
\Delta G_r = \nu_{\text{H}^+,r} \Delta G_{\text{H}^+} - \nu_{x,r} \Delta G_x\\
\label{eq:W.14}
\Delta G_M = \Delta G_{E,M} - \nu_{\text{H}^+,M} \Delta G_{\text{H}^+}
\end{align}
and where we identify from the lemma~\ref{lem:per-ion}
\begin{equation}
\label{eq:W.15}
\Delta G_{E,r} \equiv \nu_{\text{H}^+,r} \Delta G_{\text{H}^+}
\end{equation}
and where steady-state for protons reads as
\begin{equation}
    \label{eq:W.16}
 0 =   j_{\text{H}^+} - j_{\text{OH}^-} - \nu_{\text{H}^+,M} j_M + \sum_{x\backslash \{\text{OH}^-,\text{H}^+\}} \nu_{\text{H}^+,r} j_r
\end{equation}
We next show that PMF must be negative. 
\begin{proposition}[PMF must be negative]\label{prop:negative-pmf}
Under Model~B with a spontaneous metabolic proton pump
($\Delta G_{E,M} \le 0$), any steady-state solution satisfies
$\Delta G_{\text{H}^+} \le 0$.
\end{proposition}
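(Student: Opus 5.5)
The plan is a proof by contradiction, in the style of the proof of Lemma~\ref{lem:per-ion}. Assume a steady state of Model~B with $\Delta G_{\text{H}^+} > 0$. I will show that every term of the proton steady-state equation~\eqref{eq:W.16} is then strictly negative, or at least non-positive with one strictly negative. The sum therefore cannot vanish. Throughout I use the thermodynamically consistent rate law~\eqref{eq:leak-generic}: with positive forward rates $j^{+}>0$, the flux of any reaction has the sign of $-\Delta G$.

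\emph{Step 1: leak terms.} If $\Delta G_{\text{H}^+}>0$, then~\eqref{eq:leak-ion} gives $j_{\text{H}^+}<0$. By the hydroxide computation (Fig.~\ref{fig:sign_DGOH}), $\Delta G_{\text{OH}^-}$ has the opposite sign to the PMF. Hence $j_{\text{OH}^-}\ge 0$, and $j_{\text{H}^+}-j_{\text{OH}^-}<0$.

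\emph{Step 2: metabolic pump.} By~\eqref{eq:W.14}, $\Delta G_M=\Delta G_{E,M}-\nu_{\text{H}^+,M}\Delta G_{\text{H}^+}<0$, because $\Delta G_{E,M}\le 0$ and $\nu_{\text{H}^+,M}>0$. So $j_M>0$, and the term $-\nu_{\text{H}^+,M}j_M$ is strictly negative.

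\emph{Step 3: antiporters.} Fix $x\in\{\text{Na}^+,\text{K}^+,\text{Cl}^-\}$. Since $x$ is moved only by its leak and its antiporter~\eqref{rec:B.1}, its steady state reads $j_x=\nu_{x,r}j_r$ as in~\eqref{eq:W.4}. So $j_x$ and $j_r$ have the same sign, and consequently $\Delta G_x$ and $\Delta G_r$ have the same sign, including both being zero together.
\begin{itemize}
\item If $\Delta G_x\le 0$, then~\eqref{eq:W.13} gives $\Delta G_r=\nu_{\text{H}^+,r}\Delta G_{\text{H}^+}-\nu_{x,r}\Delta G_x\ge\nu_{\text{H}^+,r}\Delta G_{\text{H}^+}>0$. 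This contradicts the sign matching.
\item Hence $\Delta G_x>0$, so $\Delta G_r>0$ and $j_r<0$: the antiporter runs in reverse.
\end{itemize}
Each term $\nu_{\text{H}^+,r}j_r$ in~\eqref{eq:W.16} is therefore strictly negative.

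Summing Steps 1--3, the right-hand side of~\eqref{eq:W.16} is strictly negative, which contradicts the steady state. Hence $\Delta G_{\text{H}^+}\le 0$.

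The main obstacle is Step 3. Lemma~\ref{lem:per-ion} cannot be invoked directly, because the identification~\eqref{eq:W.15} would make $\Delta G_{E,r}=\nu_{\text{H}^+,r}\Delta G_{\text{H}^+}>0$, which violates the lemma's hypothesis. The sign-matching case analysis must therefore be redone by hand. A secondary point needing care is the hydroxide sign: it relies on the paper's numerical finding that $\Delta G_{\text{OH}^-}$ opposes the PMF. The argument also needs $j_x^{+}>0$ for all reactions, which holds since $P_x$, $[x]_e$ and $f_b(u)$ are positive.
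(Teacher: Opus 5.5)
Your proof is correct and is essentially the paper's own argument. Assuming $\Delta G_{\text{H}^+}>0$, the pump runs forward, each antiporter is forced into reverse (since $\Delta G_x>0$ and hence $\Delta G_r>0$), and the hydroxide term has the opposite sign, so every term in the proton balance \eqref{eq:W.16} exports protons and no steady state exists; your version is slightly more careful than the paper's (it handles the boundary case $\Delta G_x=0$ via sign matching and uses only $\Delta G_{E,M}\le 0$), but the route is the same.
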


\begin{proof}[Proof]
Assume $\Delta G_{\text{H}^+} > 0$. Then assuming spontaneous energy providing half-reaction $E^\star_M \rightleftharpoons E_M$, $\Delta G_{E,M} < 0$ and from Eq.~\eqref{eq:W.14} $\Delta G_M < 0$. Then for any non-proton
antiporter, $\Delta G_x > 0$ is forced (else by
Eq.~\eqref{eq:leak-generic} $j_x > 0$ and $j_r < 0$, violating the
steady-state constraint Eq.~\eqref{eq:W.4}); because $\Delta G_x > 0$ then $j_x < 0$ and steady state requires $j_r < 0$, meaning that $\Delta G_r > 0$. Because $\Delta G_{\text{OH}^-}$ is of opposite sign than $\Delta G_{\text{H}^+}$ then all four flux contributions
in Eq.~\eqref{eq:W.16}---proton leak, hydroxide leak, metabolic
pump $M$, and proton-antiporter terms---export protons out of
the cell. No steady state can then satisfy the proton
balance Eq.~\eqref{eq:W.16}. Hence $\Delta G_{\text{H}^+} \le 0$.
\end{proof}

\paragraph{Non-proton ions.}

With Proposition~\ref{prop:negative-pmf} in hand,
Lemma~\ref{lem:per-ion} applies to each proton-coupled
reaction~\eqref{rec:B.1}, indeed
\begin{equation}
    \label{eq:W.17}
    Q_x = - F \nu_{\text{H}^+,r} \Delta G_{\text{H}^+} j_r
\end{equation}
and
\begin{equation}
\label{eq:W.18}
    \epsilon_{x,r} = \dfrac{\nu_{x,r} \Delta G_x}{\nu_{\text{H}^+,r} \Delta G_{\text{H}^+}}
\end{equation}
where the inequality $0 \leq \epsilon_r \leq 1$ derives from lemma~\ref{lem:per-ion}, having identified Eq.~\eqref{eq:W.15}:
\begin{equation}
\label{eq:W.18b}
\dfrac{\nu_{\text{H}^+,r} \Delta G_{\text{H}^+}}{\nu_{x,r}} \leq \Delta G_x \leq 0
\end{equation}
Further injecting Eq.~\eqref{eq:W.18} in Eq.~\eqref{eq:W.17}, and substituting $j_r$ using the steady-state condition for ion $x$ Eq.~\eqref{eq:W.4}
\begin{equation}
\label{eq:W.19}
    Q_x = -F \nu_{\text{H}^+,r} \Delta G_{\text{H}^+} \dfrac{j_x}{\nu_{x,r}} = -F \dfrac{1}{\epsilon_{x,r}} \Delta G_x j_x
\end{equation}
yielding
\begin{equation}
\label{eq:W.19a}
Q_x \ge -F\,\Delta G_x\, j_x
\end{equation}
for every non-proton ion.

\paragraph{Protons and hydroxide ions.}

We know from proposition~\ref{prop:negative-pmf} that $\Delta G_{\text{H}^+} \leq 0$, and from inequality~\eqref{eq:W.18b} that $\Delta G_r = \nu_{\text{H}^+,r} \Delta G_{\text{H}^+}  - \nu_{x,r} \Delta G_x \leq 0$. In addition $\Delta G_{\text{OH}^-}$ is of opposite sign than $\Delta G_{\text{H}^+}$, see figure~\ref{fig:sign_DGOH}. In order for steady-state Eq.~\eqref{eq:W.16} to be achievable therefore requires negative $\Delta G_M$, from Eq.~\eqref{eq:W.14}, this means:
\begin{equation}
    \label{eq:W.19b}
    \dfrac{\Delta G_{E,M}}{\nu_{\text{H}^+,M}}\leq \Delta G_{\text{H}^+} \leq 0
\end{equation}
and the energetic efficiency
\begin{equation}
\label{eq:W.19c}
    \epsilon_M \equiv \dfrac{\nu_{\text{H}^+,M} \Delta G_{\text{H}^+}}{\Delta G_{E,M}}
\end{equation}
is bounded $\in [0,1]$. We then isolate $j_M$ from Eq.~\eqref{eq:W.16}:
\begin{equation}
    \label{eq:W.20}
    j_M = \dfrac{1}{\nu_{\text{H}^+,M}} \left(j_{\text{H}^+} - j_{\text{OH}^-} + \sum_r \nu_{\text{H}^+,r} j_r \right)
\end{equation}
we then substitute $j_r$ using Eq.~\eqref{eq:W.4} to obtain
\begin{equation}
    \label{eq:W.21}
    j_M = \dfrac{1}{\nu_{\text{H}^+,M}} \left(j_{\text{H}^+} - j_{\text{OH}^-} + \sum_r \dfrac{\nu_{\text{H}^+,r}}{\nu_{x,r}} j_x \right)
\end{equation}
before injecting the energetic efficiency Eq.~\eqref{eq:W.18}:
\begin{equation}
        \label{eq:W.22}
    j_M = \dfrac{1}{\nu_{\text{H}^+,M}} \left(j_{\text{H}^+} - j_{\text{OH}^-} + \sum_r \dfrac{\Delta G_x}{\Delta G_{\text{H}^+}} \dfrac{1}{\epsilon_{x,r}} j_x \right)
\end{equation}
We then substitute $\nu_{\text{H}^+,M}$ above, injecting the energetic efficiency of reaction $M$ Eq.~\eqref{eq:W.19c} to obtain
\begin{equation}
        \label{eq:W.23}
    j_M = \dfrac{1}{\epsilon_M} \dfrac{\Delta G_{\text{H}^+}}{\Delta G_{E,M}} \left(j_{\text{H}^+} - j_{\text{OH}^-} + \sum_r \dfrac{\Delta G_x}{\Delta G_{\text{H}^+}} \dfrac{1}{\epsilon_{x,r}} j_x \right)
\end{equation}
Finally applying the definition of $Q_T$ Eq.~\eqref{eq:QT_def} to model B (where the only metabolically-coupled reaction is $M$, see \eqref{rec:B.2}):
\begin{equation}
\label{eq:W.24}
    Q_T^B \equiv -F \Delta G_{E,M} j_M = -F \dfrac{1}{\epsilon_M} \left[(\Delta G_{\text{H}^+} \left(j_{\text{H}^+} - j_{\text{OH}^-} \right) + \sum_{x \backslash \{\text{H}^+,\text{OH}^-\}} \dfrac{1}{\epsilon_{x,r}} \Delta G_x j_x\right]
\end{equation}
which can be decomposed in 
\begin{equation}
    \label{eq:W.25}
    Q_T^B = Q_{\text{H}^+} + \dfrac{1}{\epsilon_M} \sum_{x \backslash \{\text{H}^+,\text{OH}^-\}} Q_x
\end{equation}
where $Q_x$ was defined in Eq.~\eqref{eq:W.19} and where $Q_{\text{H}^+} = -F \dfrac{1}{\epsilon_M} \Delta G_{\text{H}^+} \left(j_{\text{H}^+} - j_{\text{OH}^-}\right)$.
The lower bound then follows from maximizing energetic efficiencies:
\begin{equation}
\label{eq:B-bound}
Q_T^B \;\ge\; -F\left(-\Delta G_{\text{H}^+}\, j_{\text{OH}^-}
+ \sum_{x\backslash \text{OH}^-} \Delta G_x\, j_x\right).
\end{equation}
Table~\ref{table:Neglecting_OH} shows that the hydroxide term is
always $< 6\%$ of the total, justifying the simplified form used
in the main text,
\begin{equation}
\label{eq:W.26}
Q_T^B \;\ge\; -F \sum_{x\backslash \text{OH}^-} \Delta G_x\, j_x.
\end{equation}
Theorem~\ref{thm:bound} follows.

\begin{figure}[h!]
\begin{center}
\includegraphics[scale=0.3]{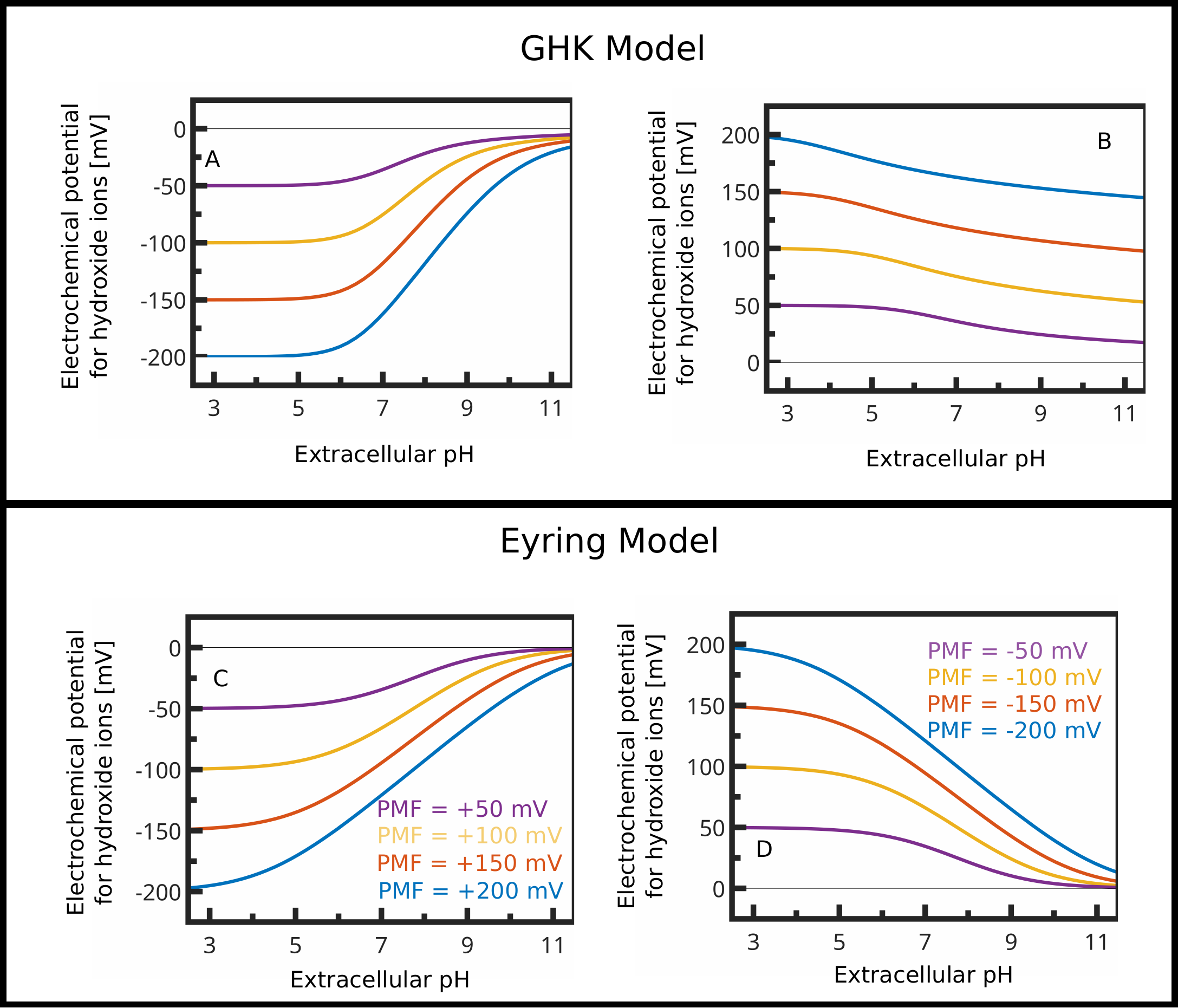}
\caption{\label{fig:sign_DGOH} Electrochemical potential of hydroxide ions, $\Delta G_{\mathrm{OH}^-}$, as a function of extracellular pH for different PMF values. The black line marks $\Delta G_{\mathrm{OH}^-} = 0$. Negative PMF yields positive $\Delta G_{\mathrm{OH}^-}$, whereas positive PMF yields negative $\Delta G_{\mathrm{OH}^-}$. (A, C) Positive PMF values ${+50, +100, +150, +200}$ mV. (B, D) Negative PMF values ${-50, -100, -150, -200}$ mV. Captive charges are set to enforce a fixed osmotic pressure $\Delta \Pi = 1$ atm, with average valency constrained between $-2$ and $+2$.}
\end{center}
\end{figure}

\begin{table}[h!]
\centering
\begin{tabular}{|c|c|c|c|c|}
\hline
pH$_e$ & $\Delta G_{\text{H}^+}$ & $[\text{Ion}]_e$ & leak Model &Fraction of dissipation imputable to $\text{OH}^-$\\
\hline
 3 & -200 mV & 100 mM & Erying & $8.37 \times 10^{-3} \%$ \\
 \hline
  7 &  -200 mV & 100 mM & Erying & $0.70 \%$ \\
 \hline
  11 &  -200 mV &  100 mM & Erying & $3.61 \times 10^{-3} \%$ \\
 \hline
  3 &  -200 mV &  1000 mM & Erying & $8.36 \times 10^{-3} \%$\\
 \hline
  7 &  -200 mV &  1000 mM & Erying & $8.24 \times 10^{-2} \%$\\
 \hline
  11 &  -200 mV &  1000 mM & Erying & $3.61 \times 10^{-4} \%$\\
   \hline
  5 &  -100 mV &  100 mM & Erying & $0.77 \%$ \\
 \hline
  7 &  -100 mV &  100 mM & Erying & $3.98 \%$ \\
 \hline
  9 &  -100 mV &  100 mM & Erying & $0.26 \%$\\
     \hline
  5 &  -100 mV &  1000 mM & Erying & $0.76 \%$ \\
 \hline
  7 &  -100 mV &  1000 mM & Erying & $0.49 \%$ \\
 \hline
  9 &  -100 mV &  1000 mM & Erying & $2.56 \times 10^{-2} \%$\\
 \hline
  3 & -200 mV & 100 mM & GHK & $8.50 \times 10^{-3} \%$\\
 \hline
  7 &  -200 mV & 100 mM & GHK & $3.48 \%$ \\
 \hline
  11 &  -200 mV &  100 mM & GHK &  $0.91 \%$ \\
 \hline
  3 &  -200 mV &  1000 mM & GHK & $8.49 \times 10^{-3} \%$\\
 \hline
  7 &  -200 mV &  1000 mM & GHK & $0.42\%$\\
 \hline
  11 &  -200 mV &  1000 mM & GHK & $9.76 \times 10^{-2} \%$\\
   \hline
  5 &  -100 mV &  100 mM & GHK & $0.77 \%$\\
 \hline
  7 &  -100 mV &  100 mM & GHK &$5.79 \%$ \\
 \hline
  9 &  -100 mV &  100 mM & GHK & $1.73 \%$\\
     \hline
  5 &  -100 mV &  1000 mM & GHK & $0.77 \%$\\
 \hline
  7 &  -100 mV &  1000 mM & GHK &$0.72 \%$ \\
 \hline
  9 &  -100 mV &  1000 mM & GHK & $0.18 \%$\\
 \hline
\end{tabular}
\caption{\label{table:Neglecting_OH} Neglecting hydroxide ion leakage. Fraction of energetic dissipation imputable to ion leake is always smaller than $6\%$, for all the conditions sampled in the main text.}
\end{table}
\section{Dimensional analysis of the leakage power density}
\label{appendix:F}

We verify that the quantity
\begin{equation}
\label{eq:F.1}
C_x \;=\; \Delta G_x\, \frac{S}{V}\, P_x\, [x]_e\, f_b(u)\,
\left(1 - e^{\eta\, \Delta G_x}\right)
\end{equation}
has the intended units. The product of dimensions is
\[
\underbrace{\text{L}^{-1}}_{S/V}\,
\underbrace{\text{L}\,\text{T}^{-1}}_{P_x}\,
\underbrace{\text{mol}\,\text{L}^{-3}}_{[x]_e}\,
\underbrace{\text{J}\,\text{C}^{-1}}_{\Delta G_x}
\;=\; \text{mol}\,\text{T}^{-1}\,\text{L}^{-3}\,\text{J}\,\text{C}^{-1},
\]
so $C_x$ is not yet a power density. Multiplying by the Faraday
constant $F$ ($\text{C}\,\text{mol}^{-1}$) gives
\begin{equation}
\label{eq:F.7}
Q_x \;\equiv\; F\, C_x,
\qquad [Q_x] \;=\; \text{J}\,\text{T}^{-1}\,\text{L}^{-3},
\end{equation}
the intended power density (watt per cubic meter). We use $Q_x$
consistently throughout.

\section{Reference power density of \textit{E.\ coli}}
\label{appendix:G}

The respiratory reaction driving proton antiporters and F$_1$F$_o$
is
\begin{equation}
\label{eq:H.1}
\tfrac{1}{2}\,\text{O}_2 + \text{NADH} + \text{H}^+
\;\rightleftharpoons\; \text{H}_2\text{O} + \text{NAD}^+,
\end{equation}
with free energy $\Delta G_R = 221.3$~kJ/mol. A reference
\textit{E.\ coli} cell consumes $3.08 \times 10^9$ O$_2$ molecules
per cell per hour~\cite{Terradot2024}. Converting to power density,
\begin{equation}
\label{eq:H.2}
Q_0 \;=\;
\frac{3.08 \times 10^9}{N_A}\,\Delta G_R\,\frac{1}{V}\,\frac{1}{3600},
\end{equation}
where the final factor converts per-hour to per-second.
Unit check: (molecules$\cdot$cell$^{-1}$$\cdot$h$^{-1}$) /
(molecules$\cdot$mol$^{-1}$) $\times$ (J$\cdot$mol$^{-1}$) /
(m$^3$$\cdot$cell$^{-1}$) / (s$\cdot$h$^{-1}$) $=$
J$\cdot$s$^{-1}$$\cdot$m$^{-3}$, as required. Numerically,
\begin{equation}
\label{eq:H.8}
Q_0 \;\approx\; 1.35 \times 10^5~\text{W/m}^3,
\qquad Q_0 \cdot V \;\approx\; 3.14 \times 10^{-13}~\text{W/cell}.
\end{equation}

\section{Setting the osmotic pressure}
\label{appendix:H}

To constrain the model to a prescribed osmotic pressure, we write
$\Delta\Pi$ in terms of the ionic imbalance across the membrane:
\begin{align}
\label{eq:I.1}
\frac{\Delta\Pi}{RT} &=
[\text{Y}]_i + [\text{H}^+]_i + [\text{OH}^-]_i
- [\text{H}^+]_e - [\text{OH}^-]_e
\nonumber\\
&
- [\text{C}^+]_e\,\bigl(1 - e^{\eta(\Delta G_{\text{C}^+} - \Delta\psi)}\bigr)
- [\text{Cl}^-]_e\,\bigl(1 - e^{\eta(\Delta G_{\text{Cl}^-} + \Delta\psi)}\bigr)
\end{align}
In the pH$_e \in [3,11]$ range, proton and hydroxide concentrations
are at most $10^{-3}$~M, whereas $[\text{Y}]_i$ and cation terms
are of order $10^{-1}$~M, so the H$^+$/OH$^-$ terms contribute
$< 1\%$ and we drop them:
\begin{equation}
\label{eq:I.2}
\frac{\Delta\Pi}{RT} \;\approx\;
[\text{Y}]_i
- [\text{C}^+]_e\,\bigl(1 - e^{\eta(\Delta G_{\text{C}^+} - \Delta\psi)}\bigr)
- [\text{Cl}^-]_e\,\bigl(1 - e^{\eta(\Delta G_{\text{Cl}^-} + \Delta\psi)}\bigr)
\end{equation}
Substituting~\eqref{eq:I.2} into the approximated voltage
equation~\eqref{eq:C8} eliminates $\alpha_{\text{Y}}$ in favor of
$\Delta\Pi$, yielding (after using
$[\text{C}^+]_e/[\text{Ion}]_e = [\text{Cl}^-]_e/[\text{Ion}]_e =
1/2$) a single equation in
$(z_{\text{Y}}, \Delta G_{\text{C}^+}, \Delta G_{\text{Cl}^-},
\Delta\psi, \Delta\Pi)$:
\begin{equation}
\label{eq:I.4b}
\begin{split}
0 \;=\; z_{\text{Y}}
\left(\frac{\Delta\Pi}{RT} +
\frac{1}{2}\bigl(1 - e^{\eta(\Delta G_{\text{C}^+} - \Delta\psi)}\bigr) +
\frac{1}{2}\bigl(1 - e^{\eta(\Delta G_{\text{Cl}^-} + \Delta\psi)}\bigr)\right)\\
+ \frac{1}{2}\left[e^{\eta(\Delta G_{\text{C}^+} - \Delta\psi)} -
e^{\eta(\Delta G_{\text{Cl}^-} + \Delta\psi)}\right].
\end{split}
\end{equation}

\paragraph{Dispatch.}
Following the same three-case dispatch as in Section~3 (Donnan band,
below Donnan, above Donnan), we solve for
$(z_{\text{Y}},\Delta G_{\text{C}^+},\Delta G_{\text{Cl}^-})$ using
Eq.~\eqref{eq:I.4b} in place of Eq.~\eqref{eq:C8}. The resulting
three Donnan regimes are given by Eqs.~(I.5)--(I.7)

\pagebreak

\section{Supplementary figures}

\begin{figure}[h!]
\centering
\includegraphics[scale=0.20]{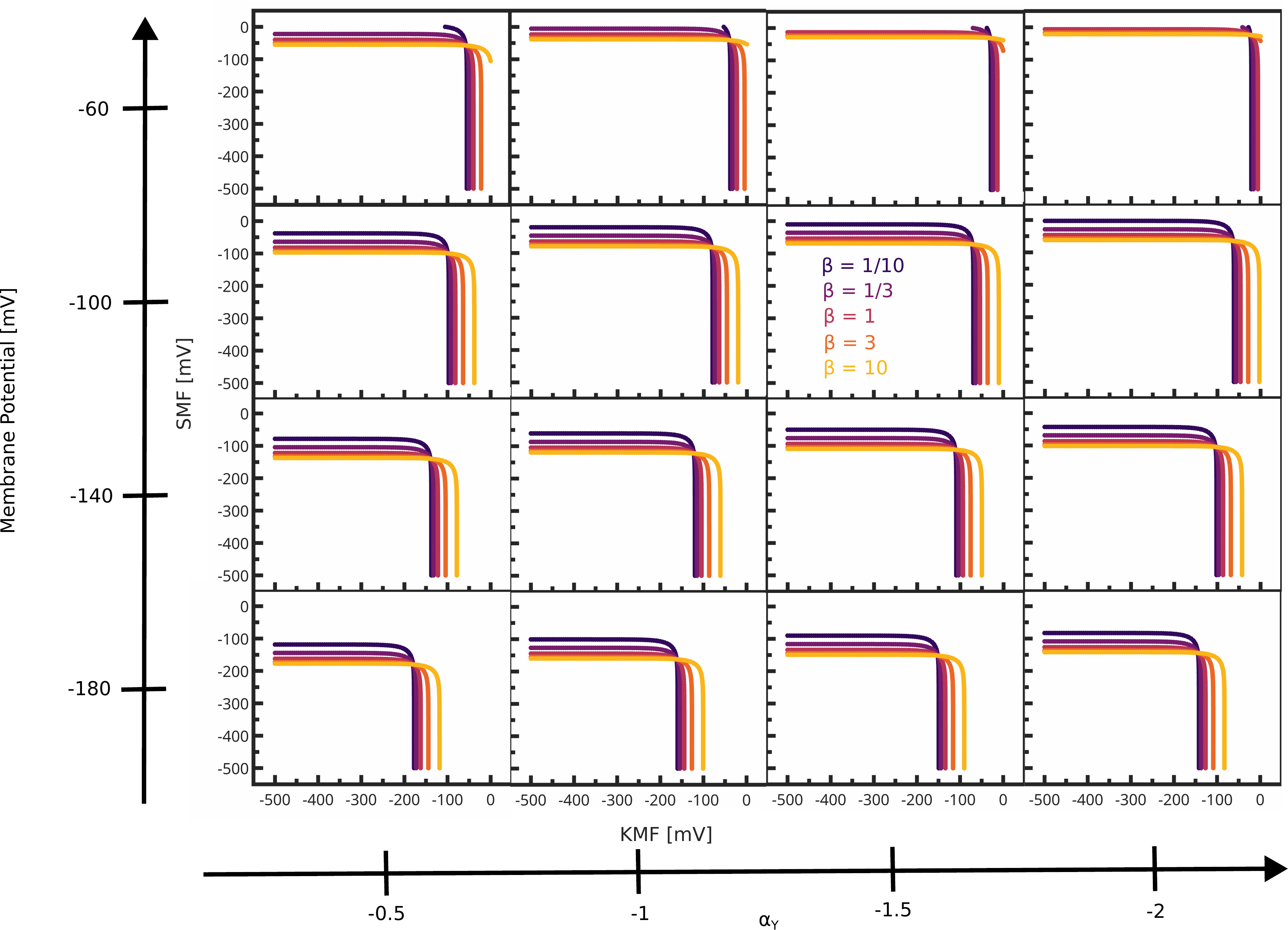}
\caption{\label{Appendix_fig:figure5}
Possible strategies for sustaining negative membrane potentials. From left to right, increasing captive charge ($\alpha_\mathrm{Y}$) reduces the magnitude of the cationic motive forces (CMFs) required to maintain a given membrane potential. From top to bottom, increasingly negative membrane potentials require more negative CMFs, for both potassium (KMF) and sodium (SMF). Insets: different extracellular Na$^+$/K$^+$
 ratios yield distinct CMF pairs compatible with the same membrane potential at fixed $\alpha_\mathrm{Y}$. In general, higher extracellular Na$^+$
 relative to K$^+$
 shifts the optimum toward a less negative KMF, whereas higher extracellular K$^+$shifts it toward a less negative SMF.}
\end{figure}

\begin{figure}[h!]
\centering
\includegraphics[scale=0.4]{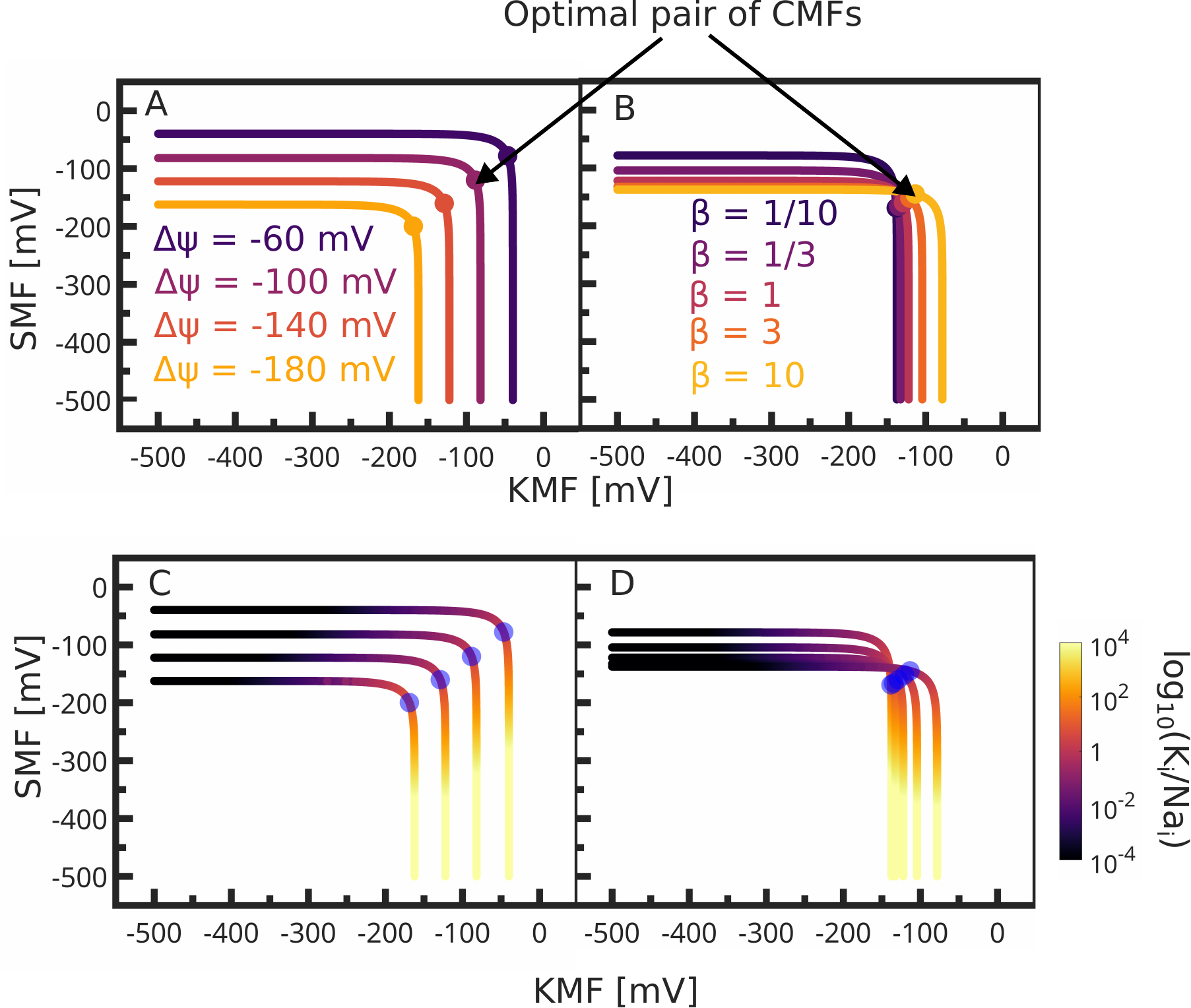}
\caption{\label{Appendix_fig:figure6}
Among all admissible CMF pairs ${\mathrm{KMF}, \mathrm{SMF}}$, a unique pair minimizes the dissipation required to maintain a given membrane potential. Panels (A,B) use $\alpha_\mathrm{Y} = -0.5$ and a permeability ratio $P_{\mathrm{K}^+}/P_{\mathrm{Na}^+} = 10$. (A) At fixed $\beta = 1$, the optimal SMF is systematically more negative than the KMF, reflecting the higher permeability of K$^+$
. (B) At fixed $\Delta\psi = -140$ mV, increasing extracellular Na$^+$
 relative to K$^+$
 (increasing $\beta$) shifts the optimal CMF pair toward more negative values of both KMF and SMF. (C) The resulting intracellular ratios $[\mathrm{K}^+]_i/[\mathrm{Na}^+]_i$ are greater than 1. (D) These ratios vary with $\beta$.}
\end{figure}

\begin{figure}[h!]
\centering
\includegraphics[scale=0.2]{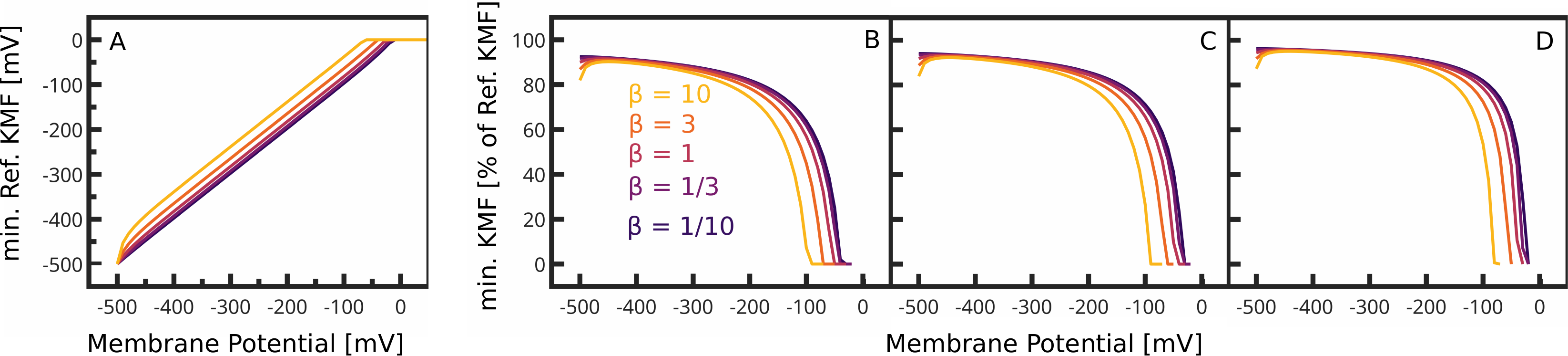}
\caption{\label{Appendix_fig:figure7}
Maximal (least negative) KMF required to sustain a given membrane potential. (A) $\alpha_\mathrm{Y} = -0.5$. (B–D) Ratio of the maximal KMF relative to panel (A), for $\alpha_\mathrm{Y} = -1$, $-1.5$, and $-2$, respectively. Curves correspond to different extracellular ratios $\beta = [\mathrm{Na}^+]_e/[\mathrm{K}^+]e \in {1/10, 1/3, 1, 3, 10}$. Increasing $\beta$ (lower extracellular K$^+$
) shifts the required KMF toward less negative values. Similarly, more negative $\alpha\mathrm{Y}$ (B–D) increases the minimal KMF (closer to 0).}
\end{figure}

\begin{figure}[h!]
\centering
\includegraphics[scale=0.5]{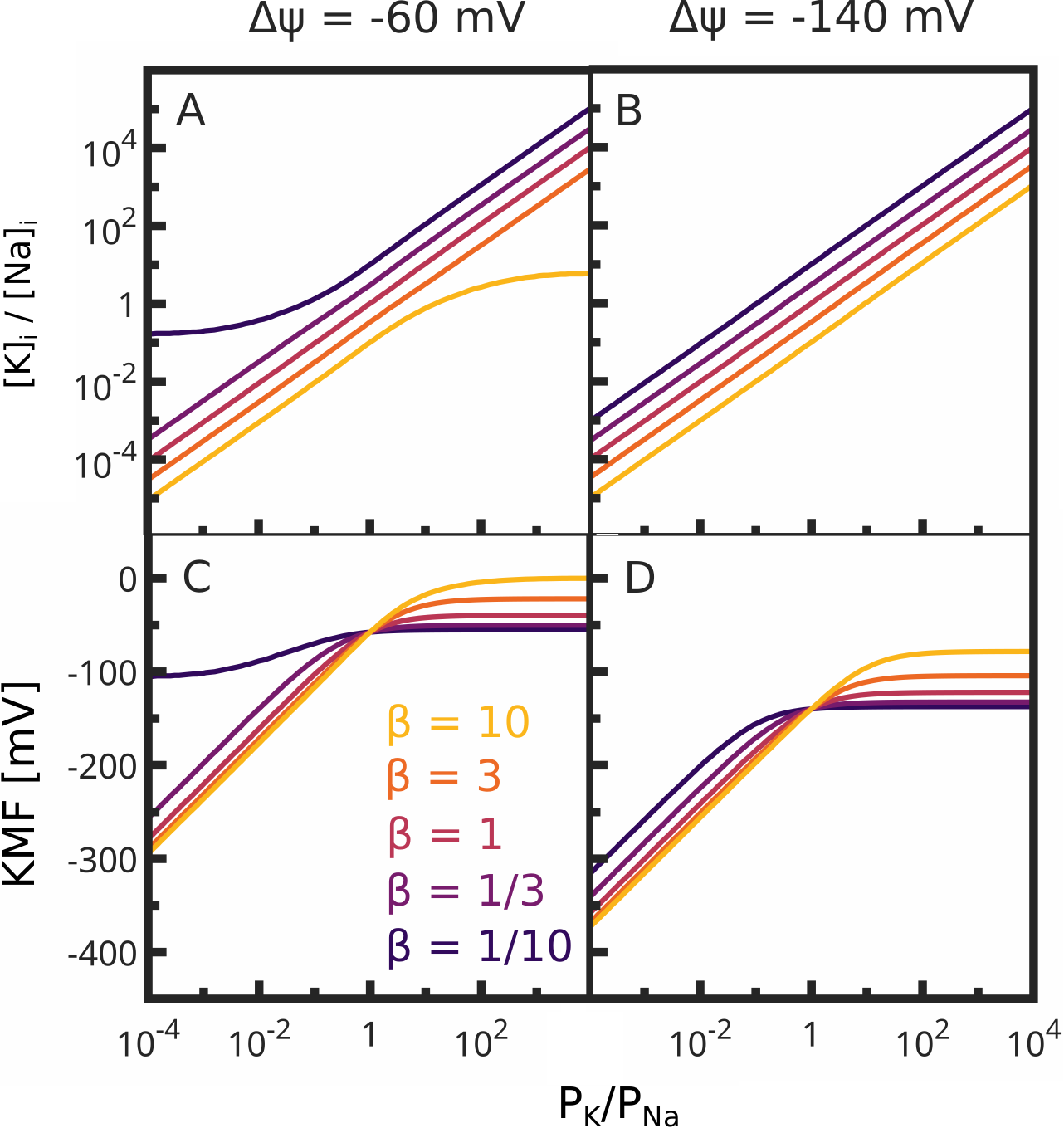}
\caption{\label{Appendix_fig:figure8}
Impact of permeability on optimal strategies for maintaining membrane potential. (A,C) $\Delta\psi = -60$ mV; (B,D) $\Delta\psi = -140$ mV. (A,B) Increasing potassium permeability shifts the optimal intracellular composition toward a higher $[\mathrm{Na}^+]_i/[\mathrm{K}^+]_i$ ratio. (C,D) Increasing potassium permeability shifts the optimal strategy toward a less negative KMF (closer to 0).}
\end{figure}

\begin{figure}[h!]
\centering
\includegraphics[scale=0.5]{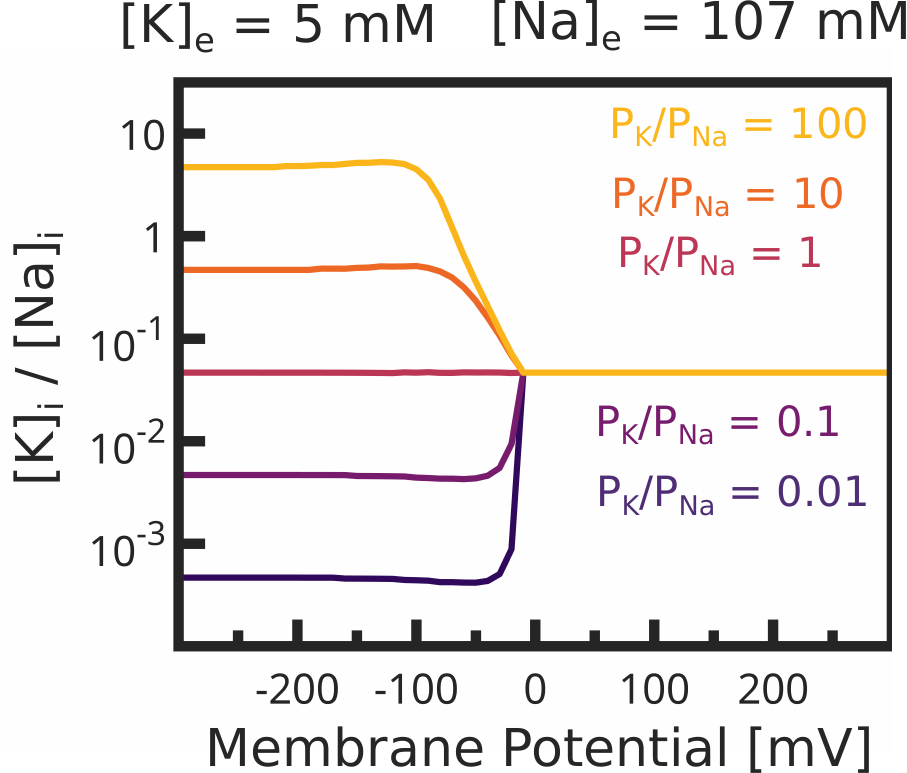}
\caption{
\label{Appendix_fig:figure9}
Impact of the permeability ratio $\gamma_{\mathrm{K}^+} = P_{\mathrm{K}^+}/P_{\mathrm{Na}^+}$ on intracellular cation balance as a function of membrane potential. Higher $\gamma_{\mathrm{K}^+}$ favors optimal strategies with increased intracellular $[\mathrm{K}^+]_i/[\mathrm{Na}^+]_i$. Extracellular composition is taken from \cite{Schultz1961}.} 
\end{figure}

\begin{figure}[h!]
\centering
\includegraphics[scale=0.3]{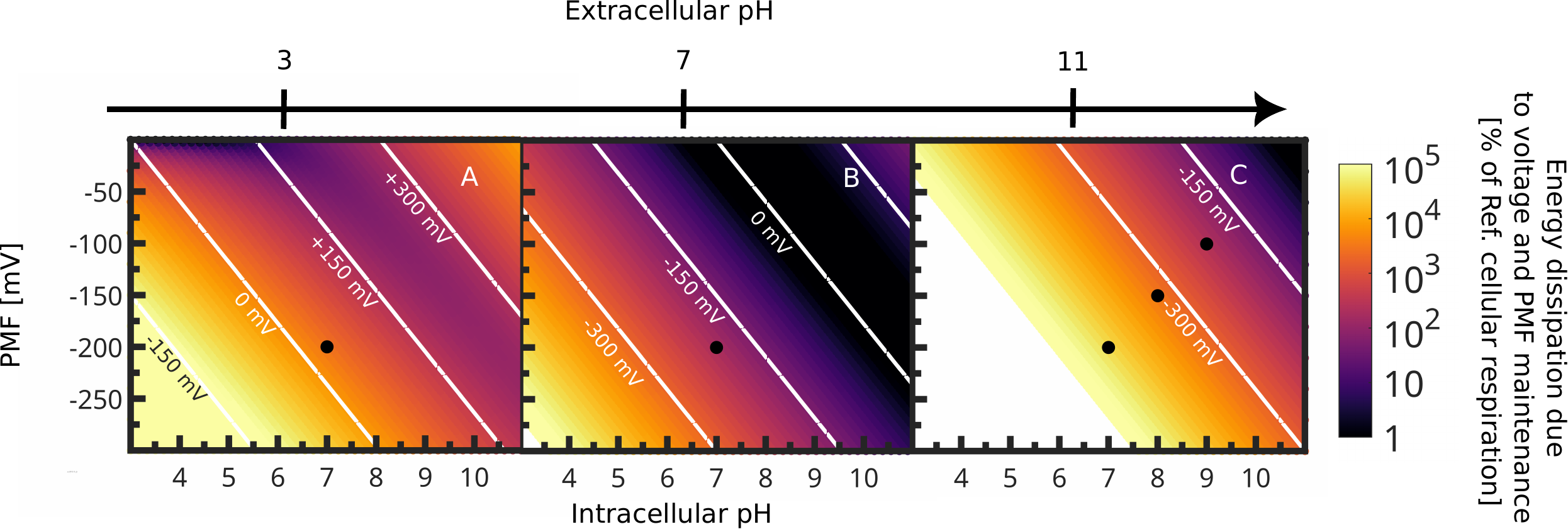}
\caption{\label{Appendix_fig:figure10}
Cost of maintaining intracellular pH and PMF at different extracellular pH: (A) 4, (B) 7, and (C) 11. White lines indicate the corresponding membrane potential. Black dots in panels (A,B) mark the reference state (neutral intracellular pH, PMF = -200 mV). In panel (C), black dots indicate that under alkaline conditions the cost-minimizing state shifts toward reduced PMF (closer to 0) and a more alkaline intracellular pH.} 
\end{figure}

\end{document}